\documentclass[pdflatex,sn-mathphys-num]{sn-jnl}

\usepackage{graphicx}%
\usepackage{multirow}%
\usepackage{amsmath,amssymb,amsfonts}%
\usepackage{amsthm}%
\usepackage{mathrsfs}%
\usepackage[title]{appendix}%
\usepackage{xcolor}%
\usepackage{textcomp}%
\usepackage{manyfoot}%
\usepackage{booktabs}%
\usepackage{algorithm}%
\usepackage{algorithmicx}%
\usepackage{algpseudocode}%
\usepackage{listings}%
\usepackage{booktabs}
\usepackage{caption}
\usepackage{subfigure}
\usepackage[table]{xcolor}
\usepackage{colortbl}
\definecolor{cellgray}{RGB}{220, 220, 220}   
\definecolor{bestfg}{RGB}{210, 90, 100}      
\definecolor{secondfg}{RGB}{60, 160, 130}    
\usepackage[table]{xcolor}
\theoremstyle{thmstyleone}%
\newtheorem{theorem}{Theorem}

\theoremstyle{thmstyletwo}%
\newtheorem{remark}{Remark}%
\theoremstyle{thmstylethree}%

\begin{document}

\title{Physics-informed line-of-sight learning for scalable deterministic channel modeling}

\author[1]{\fnm{Xiucheng} \sur{Wang}}\email{xcwang\_1@stu.xidian.edu.cn}

\author[1]{\fnm{Junxi} \sur{Huang}}\email{24012100067@stu.xidian.edu.cn}

\author*[1]{\fnm{Conghao} \sur{Zhou}}\email{conghao.zhou@ieee.org}

\author[2]{\fnm{Xuemin} \sur{Shen}}\email{sshen@uwaterloo.ca}

\author*[1]{\fnm{Nan} \sur{Cheng}}\email{dr.nan.cheng@ieee.org}

\affil*[1]{\orgdiv{School of Telecommunications Engineering}, \orgname{Xidian University}, \orgaddress{\street{No. 2 South Taibai Road}, \city{Xi'an}, \postcode{710071}, \state{Shaanxi}, \country{China}}}

\affil[2]{\orgdiv{Department of Electrical and Computer Engineering}, \orgname{University of Waterloo}, \orgaddress{\street{200 University Avenue West}, \city{Waterloo}, \postcode{N2L 3G1}, \state{Ontario}, \country{Canada}}}

\abstract{Deterministic channel modeling maps a physical environment to its site-specific electromagnetic response. Ray tracing produces complete multi-dimensional channel information but remains prohibitively expensive for area-wide deployment. We identify line-of-sight (LoS) region determination as the dominant bottleneck. To address this, we propose D$^2$LoS, a physics-informed neural network that reformulates dense pixel-level LoS prediction into sparse vertex-level visibility classification and projection point regression, avoiding the spectral bias at sharp boundaries. A geometric post-processing step enforces hard physical constraints, yielding exact piecewise-linear boundaries. Because LoS computation depends only on building geometry, cross-band channel information is obtained by updating material parameters without retraining. We also construct RayVerse-100, a ray-level dataset spanning 100 urban scenarios with per-ray complex gain, angle, delay, and geometric trajectory. Evaluated against rigorous ray tracing ground truth, D$^2$LoS achieves 3.28~dB mean absolute error in received power, 4.65$^\circ$ angular spread error, and 20.64~ns delay spread error, while accelerating visibility computation by over 25$\times$.}

\keywords{Deterministic channel modeling, radio map, line-of-sight, 6G native-AI.}

\maketitle

\section{Introduction}

Deterministic channel modeling maps a physical environment to its exact electromagnetic response. It provides site-specific channel state information (CSI) for 6G network optimization \cite{zheng2025radio}, digital twin construction \cite{11152929,11278649}, and physical-layer AI applications \cite{zeng2024tutorial,wang2024tutorial,liu2025wifo}. Unlike stochastic models that discard spatial structure by design \cite{eyceoz2002deterministic,khuwaja2018survey,kammoun2014preliminary}, deterministic approaches preserve the gain, angle, delay, and geometric trajectory of every propagation path \cite{deschamps1972ray}. This level of detail is essential for coherent beamforming \cite{tang2021optimization,ha2024radio}, MIMO channel matrix construction \cite{oh2004mimo}, and beam management in dense urban deployments \cite{chen2025learning,hoffmann2023beam}. However, generating such complete channel information at area-wide scale remains an open challenge. Ray tracing (RT) is the only method that produces multi-dimensional channel parameters in a single computation \cite{deschamps1972ray,hoydis2023sionna}, yet its cost is prohibitive for large-scale deployment \cite{he2018design}. This tension between information completeness and computational scalability motivates the search for efficient alternatives.

Existing learning-based approaches address only fragments of this problem \cite{wang2024radiodiff,zhang2023rme,wang2025iradiodiff,wang2026tutorial}. Link-level methods predict point-to-point channel parameters from local features but cannot produce spatial coverage maps \cite{guo2022overview}. Area-level radio map (RM) construction methods provide coverage predictions, yet most focus on received power alone \cite{levie2021radiounet}. Recent efforts have explored angular and delay profile estimation, but each dimension requires a separately trained model at considerable cost. Meanwhile, available datasets offer either aggregate path-loss maps or dominant-path summaries \cite{11083758,wang2026tutorial}. A few datasets provide per-path angle and delay parameters, but explicit geometric ray trajectories remain absent \cite{11083758,10122907}. This gap between what AI-driven network optimization demands and what current methods deliver remains unresolved.

We make two key observations that guide our approach. First, the computational bottleneck of RT lies not in the multi-bounce path search but in the preceding line-of-sight (LoS) preprocessing step \cite{600437,10069838}. For a scenario with $m$ building vertices, the classical rotational sweep algorithm requires $O(m \log m)$ operations per transmitter \cite{choi2023withray}. The subsequent path search operates on precomputed visibility adjacency via bounded-depth graph traversal, whose per-query cost is substantially lower \cite{valenzuela1993ray,yang1998ray}. Second, in a 2D environment of polygonal obstacles, the LoS boundary seen from any point source is piecewise-linear. Each boundary segment is defined by a visible building vertex and its projection onto a farther building edge. These two observations suggest a natural decomposition: the network predicts sparse vertex-level attributes, and the sharp boundary is reconstructed analytically.

Here we introduce D$^2$LoS\footnote{Code is available at \url{https://github.com/UNIC-Lab/D2LoS}.}, a physics-informed neural network that addresses the LoS bottleneck through vertex-level geometric decomposition. We also release RayVerse-100\footnote{Dataset is available at \url{https://github.com/UNIC-Lab/RayVerse}.}, a ray-level dataset spanning 100 diverse urban scenarios with per-ray complex gain, angle, delay, and complete geometric trajectory. The main contributions are as follows.
\begin{enumerate}
    \item We reformulate LoS map prediction from dense pixel-level classification into sparse vertex-level visibility classification and projection point regression. This formulation avoids the spectral bias of neural networks at sharp LoS boundaries. Combined with a geometric post-processing step that enforces hard physical constraints, D$^2$LoS reduces per-transmitter preprocessing complexity from $O(m \log m)$ to $O(M \log M + n)$, where $M$ is the number of boundary vertices with $M \ll m$ and $n$ is the number of evaluation points. In typical urban scenarios, $M \log M \ll n$, yielding an effective cost of $O(n)$.

    \item Because LoS computation depends only on building geometry, it is inherently frequency-independent. Once geometric ray paths are determined, cross-band channel information is obtained by updating material parameters through uniform theory of diffraction (UTD) \cite{tsingos2001modeling} formulations without retraining. This enables unified multi-dimensional radio map construction covering received signal strength (RSS), angular power spectrum (APS), and power delay profile (PDP) from a single pipeline.

    \item We construct RayVerse-100, a ray-level dataset spanning 100 diverse urban scenarios. Each record captures per-ray complex gain, angle of arrival (AoA), angle of departure (AoD), propagation delay, and complete geometric trajectory with 3D reflection-point coordinates. To our knowledge, this is the first open dataset providing explicit per-ray geometric trajectories at this scale. We also provide post-processing interfaces for coherent superposition, MIMO channel matrix synthesis, and cross-band CSI generation.

    \item Evaluated against rigorous RT ground truth on unseen test scenarios, D$^2$LoS achieves 3.28~dB mean absolute error in RSS, 4.65$^\circ$ angular spread error, and 20.64~ns delay spread error, while accelerating visibility computation by over 25$\times$.
\end{enumerate}

\section{Results}

We evaluate D$^2$LoS on the RayVerse-100 dataset against three baselines. No-Geom removes the geometric post-processing from D$^2$LoS and uses raw neural network predictions directly. RadioUNet is a U-Net architecture that predicts pixel-level LoS maps. RMTransformer is a vision transformer variant designed for radio map estimation. All methods receive the same input and are trained on the same LoS corpus. The ground truth is produced by exact rotational sweep ray tracing with UTD field computation. For each evaluation point, the top-8 rays ranked by path gain are retained.

\subsection*{D$^2$LoS achieves high-fidelity multi-dimensional radio maps}

\begin{table}[ht]
\centering
\captionsetup{font={small}, skip=16pt}
\caption{\textbf{Received signal strength prediction accuracy across 100 test scenarios.} Six metrics are reported as mean $\pm$ standard deviation over all scenarios in RayVerse-100. D$^2$LoS achieves the lowest error and highest correlation across all metrics. No-Geom denotes D$^2$LoS without geometric post-processing. RadioUNet and RMTransformer are pixel-level LoS prediction baselines. Bold values indicate the best performance.}
\label{tab:rss_comparison}
\begin{tabular}{lcccc}
\toprule
Metric & $D^2$LoS & No-Geom & RadioUNet & RMTransformer \\
\midrule
Bias (dB) & \cellcolor{cellgray!40}\textcolor{bestfg}{\textbf{-1.69 $\pm$ 3.82}} & \textcolor{secondfg}{-2.96 $\pm$ 11.76} & -6.78 $\pm$ 26.30 & -26.60 $\pm$ 28.39 \\
MAE (dB) & \cellcolor{cellgray!40}\textcolor{bestfg}{\textbf{3.28 $\pm$ 3.82}} & \textcolor{secondfg}{12.65 $\pm$ 10.01} & 32.80 $\pm$ 11.87 & 40.85 $\pm$ 15.98 \\
RMSE (dB) & \cellcolor{cellgray!40}\textcolor{bestfg}{\textbf{7.28 $\pm$ 4.57}} & \textcolor{secondfg}{19.92 $\pm$ 11.27} & 40.12 $\pm$ 11.38 & 48.11 $\pm$ 15.42 \\
MSE (dB$^2$) & \cellcolor{cellgray!40}\textcolor{bestfg}{\textbf{73.93 $\pm$ 174.32}} & \textcolor{secondfg}{523.75 $\pm$ 725.86} & 1738.76 $\pm$ 1120.59 & 2552.54 $\pm$ 1616.15 \\
NMSE & \cellcolor{cellgray!40}\textcolor{bestfg}{\textbf{0.0088 $\pm$ 0.0297}} & \textcolor{secondfg}{0.0826 $\pm$ 0.1481} & 0.2695 $\pm$ 0.2756 & 0.4789 $\pm$ 0.4912 \\
Correlation & \cellcolor{cellgray!40}\textcolor{bestfg}{\textbf{0.9537 $\pm$ 0.0603}} & \textcolor{secondfg}{0.6473 $\pm$ 0.3076} & 0.1350 $\pm$ 0.2824 & 0.1708 $\pm$ 0.3084 \\
\bottomrule
\end{tabular}
\end{table}

\begin{table}[ht]
\centering
\captionsetup{font={small}, skip=16pt}
\caption{\textbf{Angular power spectrum prediction accuracy.} Four metrics are reported as the mean over all evaluation points across 100 test scenarios. Angular spread (AS) absolute error and mean direction of arrival (MDoA) absolute error measure first-order angular statistics. Shape cosine and shape RMSE quantify the similarity between predicted and ground-truth APS profiles.}
\label{tab:aps_comparison}
\begin{tabular}{lcccc}
\toprule
Metric & $D^2$LoS & No-Geom & RadioUNet & RMTransformer \\
\midrule
AS Abs.\ Err.\ (deg) & \cellcolor{cellgray!40}\textcolor{bestfg}{\textbf{4.65}} & \textcolor{secondfg}{11.80} & 22.91 & 23.37 \\
MDoA Abs.\ Err.\ (deg) & \cellcolor{cellgray!40}\textcolor{bestfg}{\textbf{17.07}} & \textcolor{secondfg}{37.72} & 70.09 & 74.64 \\
Shape Cosine & \cellcolor{cellgray!40}\textcolor{bestfg}{\textbf{0.8924}} & \textcolor{secondfg}{0.6667} & 0.3158 & 0.2902 \\
Shape RMSE & \cellcolor{cellgray!40}\textcolor{bestfg}{\textbf{0.0035}} & \textcolor{secondfg}{0.0090} & 0.0170 & 0.0177 \\
\bottomrule
\end{tabular}
\end{table}

\begin{table}[ht]
\centering
\captionsetup{font={small}, skip=16pt}
\caption{\textbf{Power delay profile prediction accuracy.} Six metrics are reported as the mean over all evaluation points across 100 test scenarios. Delay spread (DS) and median delay capture temporal dispersion. $K$-factor error reflects the dominant-to-scattered path ratio accuracy. Effective count measures the number of resolvable multipath components.}
\label{tab:pdp_comparison}
\begin{tabular}{lcccc}
\toprule
Metric & $D^2$LoS & No-Geom & RadioUNet & RMTransformer \\
\midrule
DS Abs.\ Err.\ (ns) & \cellcolor{cellgray!40}\textcolor{bestfg}{\textbf{20.64}} & \textcolor{secondfg}{46.11} & 90.93 & 97.94 \\
Median Delay Abs.\ Err.\ (ns) & \cellcolor{cellgray!40}\textcolor{bestfg}{\textbf{47.91}} & \textcolor{secondfg}{126.67} & 245.72 & 237.56 \\
$K$-factor Abs.\ Err.\ (dB) & \cellcolor{cellgray!40}\textcolor{bestfg}{\textbf{2.23}} & \textcolor{secondfg}{4.61} & 8.77 & 9.41 \\
Effective Count Abs.\ Err. & \cellcolor{cellgray!40}\textcolor{bestfg}{\textbf{4.88}} & \textcolor{secondfg}{11.82} & 26.78 & 26.48 \\
Shape Cosine & \cellcolor{cellgray!40}\textcolor{bestfg}{\textbf{0.6434}} & \textcolor{secondfg}{0.4610} & 0.1743 & 0.1407 \\
Shape RMSE & \cellcolor{cellgray!40}\textcolor{bestfg}{\textbf{0.0327}} & \textcolor{secondfg}{0.0496} & 0.0743 & 0.0763 \\
\bottomrule
\end{tabular}
\end{table}

D$^2$LoS consistently outperforms all baselines across three channel dimensions. In received signal strength (RSS) prediction, D$^2$LoS achieves a mean absolute error (MAE) of 3.28~dB and a Pearson correlation of 0.954 averaged over 100 test scenarios (Table~\ref{tab:rss_comparison}). The closest baseline, No-Geom, yields an MAE of 12.65~dB and a correlation of 0.647. RadioUNet and RMTransformer produce MAEs exceeding 32~dB with correlations below 0.18, indicating near-complete failure in spatial power prediction. The advantage extends to angular and delay domains. For the angular power spectrum (APS), D$^2$LoS achieves an angular spread error of 4.65$^\circ$ and a shape cosine similarity of 0.892 (Table~\ref{tab:aps_comparison}). For the power delay profile (PDP), D$^2$LoS achieves a delay spread error of 20.64~ns and a $K$-factor error of 2.23~dB (Table~\ref{tab:pdp_comparison}). In both domains, RadioUNet and RMTransformer yield shape cosine values below 0.32. These low values indicate that their predicted angular and delay profiles bear little resemblance to the ground truth.

\begin{figure}[ht]
  \centering
  \includegraphics[width=\linewidth, height=0.38\textheight, keepaspectratio]{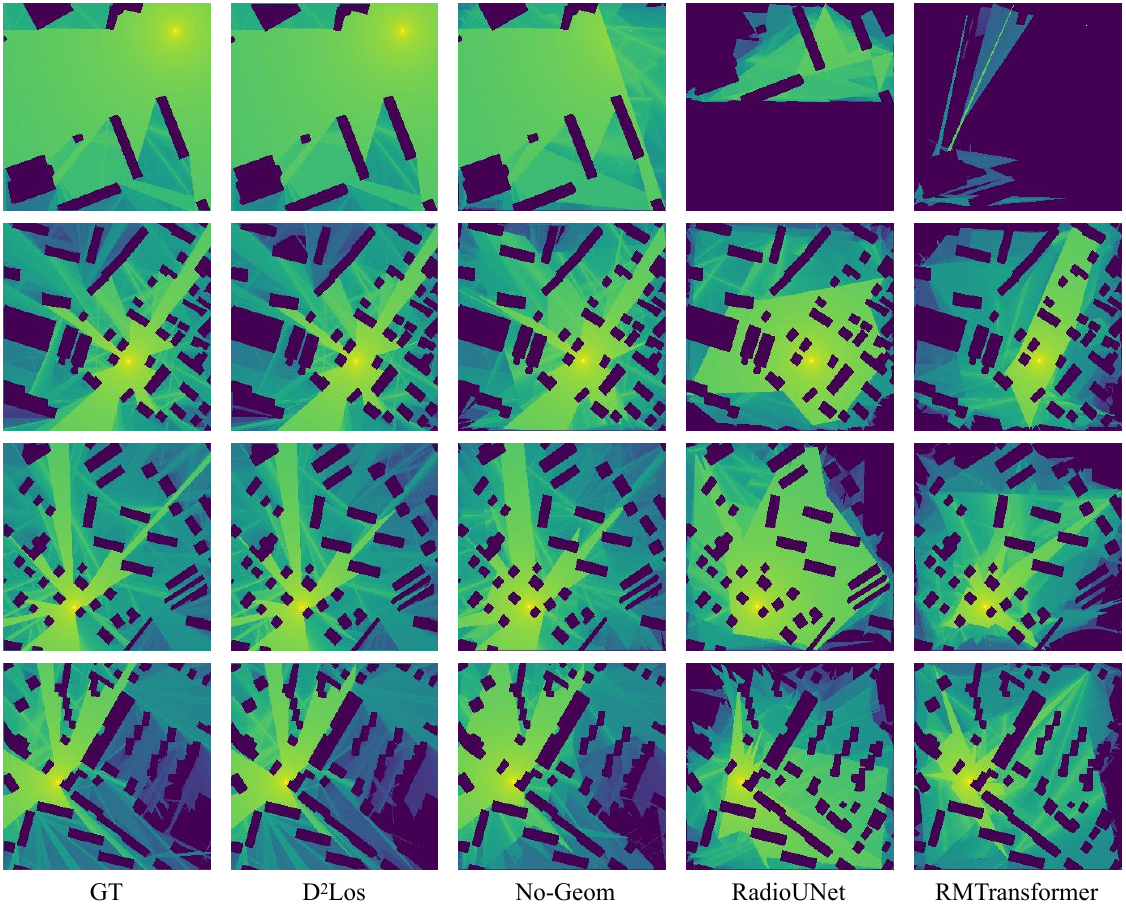}
  \caption{\textbf{Qualitative comparison of RSS radio maps across four representative scenarios.} Each row corresponds to one scenario from the RayVerse-100 test set. Columns from left to right: ground truth (GT), D$^2$LoS, No-Geom, RadioUNet \cite{levie2021radiounet}, and RMTransformer \cite{li2025rmtransformer}. D$^2$LoS reproduces the spatial power distribution with high fidelity. No-Geom captures the coarse pattern but exhibits blurred boundaries. RadioUNet and RMTransformer fail to reconstruct meaningful spatial structures.}
  \label{fig:rss}
\end{figure}

Fig.~\ref{fig:rss} provides a qualitative comparison across four representative scenarios. D$^2$LoS reproduces both the LoS regions near the transmitter and the shadow boundaries behind buildings. No-Geom captures the coarse spatial pattern but produces blurred transitions at building edges. RadioUNet and RMTransformer fail to reconstruct meaningful spatial structures. RadioUNet generates near-uniform power distributions that lose all shadow boundary detail. RMTransformer produces narrow beam-like artifacts that do not correspond to any physical propagation pattern. These qualitative observations are consistent with the quantitative metrics in Tables~\ref{tab:rss_comparison}--\ref{tab:pdp_comparison}.

\subsection*{Geometric post-processing is critical for boundary-sensitive metrics}

To isolate the contribution of geometric post-processing, we compare D$^2$LoS with its ablated variant No-Geom. This variant uses raw neural network predictions without edge snapping or ray-edge intersection refinement. The impact of removing this step is substantial across all three channel dimensions. In RSS, the MAE increases from 3.28 to 12.65~dB and the correlation drops from 0.954 to 0.647 (Table~\ref{tab:rss_comparison}). In the angular domain, the APS shape cosine decreases from 0.892 to 0.667, and the mean direction of arrival error grows from 17.07$^\circ$ to 37.72$^\circ$ (Table~\ref{tab:aps_comparison}). In the delay domain, the degradation is even more pronounced. The PDP shape cosine drops from 0.643 to 0.461, and the median delay error increases from 47.91 to 126.67~ns (Table~\ref{tab:pdp_comparison}). These results confirm that geometric post-processing is not a marginal refinement but a critical component of the pipeline.

\begin{figure}[ht]
  \centering
  \includegraphics[width=\linewidth, height=0.38\textheight, keepaspectratio]{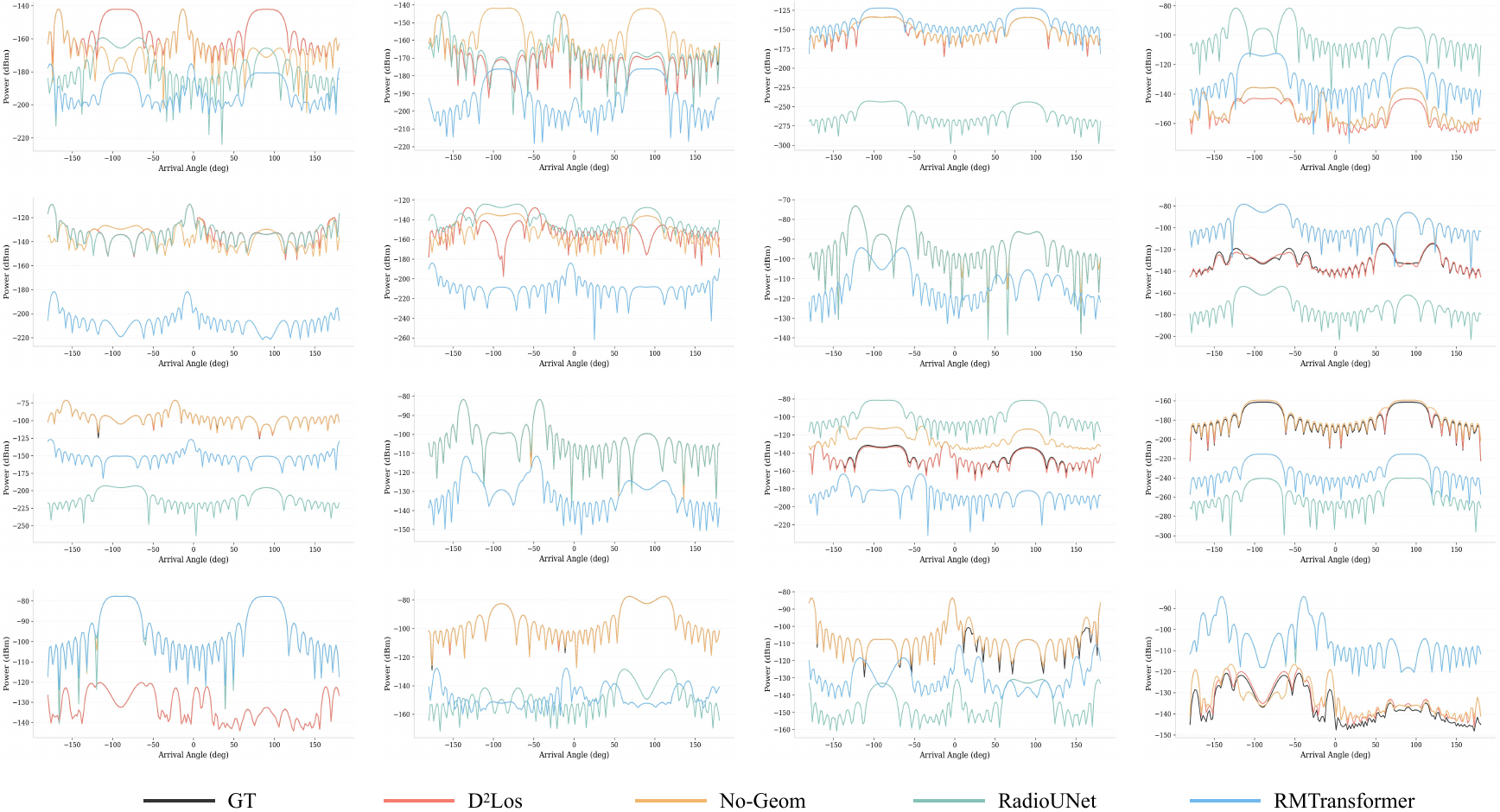}
  \caption{\textbf{Qualitative comparison of angular power spectra at selected receiver locations.} Each row corresponds to one receiver position from a different test scenario. The horizontal axis is the azimuth arrival angle and the vertical axis is the received power in dBm. D$^2$LoS closely tracks the ground truth. No-Geom preserves dominant arrival directions but introduces errors in secondary peaks. RadioUNet and RMTransformer produce noisy angular profiles with incorrect peak positions.}
  \label{fig:aps}
\end{figure}

\begin{figure}[ht]
  \centering
  \includegraphics[width=\linewidth, height=0.38\textheight, keepaspectratio]{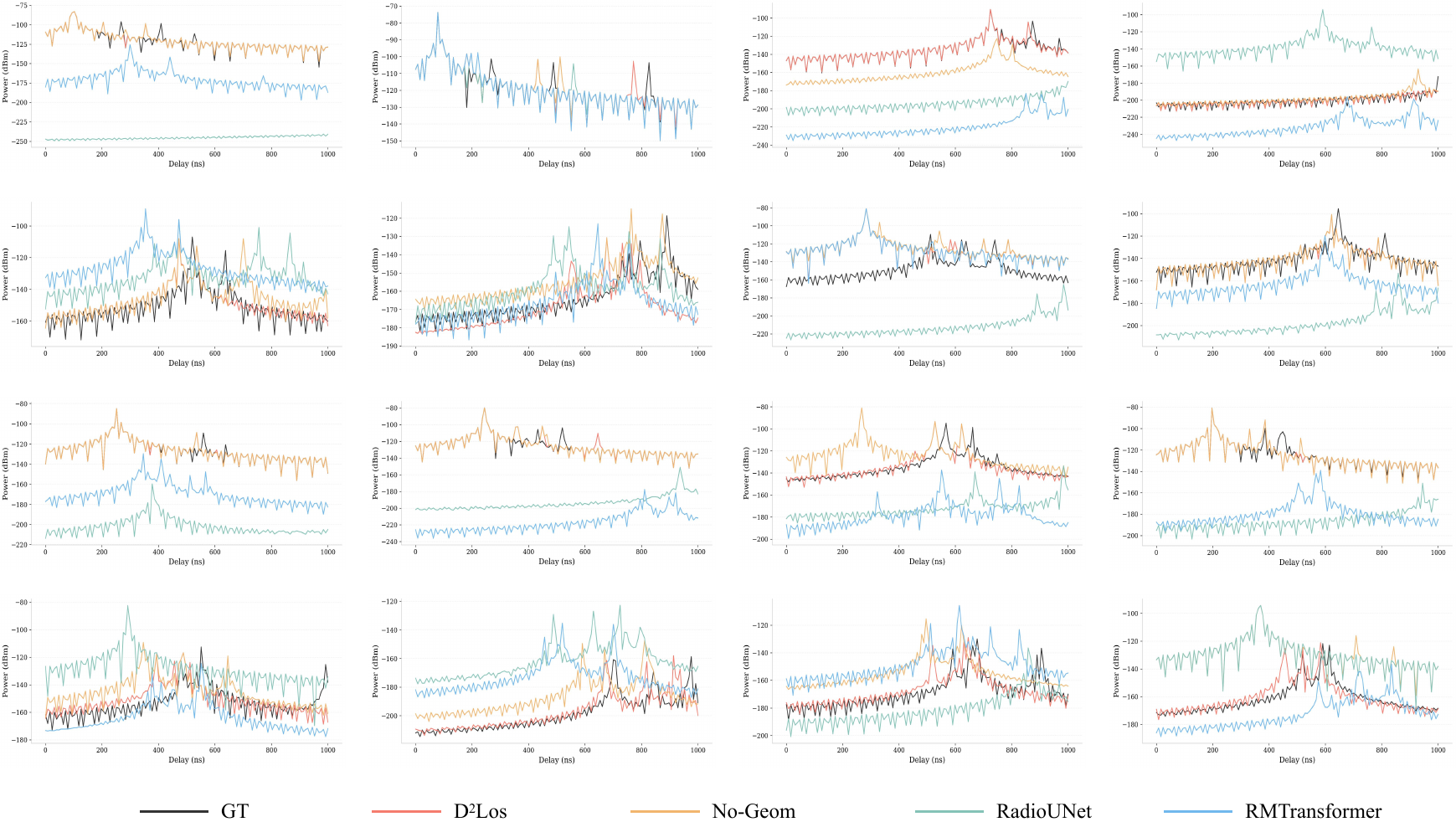}
  \caption{\textbf{Qualitative comparison of power delay profiles at selected receiver locations.} Each row corresponds to one receiver position from a different test scenario. The horizontal axis is the propagation delay in nanoseconds and the vertical axis is the received power in dBm. D$^2$LoS accurately reproduces the delay-domain structure. No-Geom exhibits shifted peak positions. RadioUNet and RMTransformer produce delay profiles with spurious peaks at large delays.}
  \label{fig:pdp}
\end{figure}

The qualitative comparisons reinforce these findings. In the APS curves (Fig.~\ref{fig:aps}), No-Geom preserves the dominant arrival directions but introduces noticeable errors in secondary peaks. In the PDP curves (Fig.~\ref{fig:pdp}), No-Geom exhibits shifted peak positions and attenuated late-arriving components. These shifts indicate that boundary inaccuracies propagate into systematic path length errors in the downstream ray tracing. The delay domain shows a larger relative degradation than the angular domain. This asymmetry arises because a small boundary shift may preserve the approximate arrival angle of a reflected path while directly altering the total propagation distance. The resulting delay errors accumulate across multiple bounces. This pattern is consistent across all 100 test scenarios, as confirmed by the violin distributions in Fig.~\ref{fig:violin_aps} and Fig.~\ref{fig:violin_pdp}.

\subsection*{Pixel-level baselines fail at LoS boundary reconstruction}

RadioUNet and RMTransformer both formulate LoS prediction as dense pixel-level image-to-image translation. Their performance collapses across all channel dimensions. In RSS prediction, RadioUNet achieves a correlation of only 0.135 and RMTransformer achieves 0.171 (Table~\ref{tab:rss_comparison}). These values indicate that the predicted power maps are nearly uncorrelated with the ground truth. The violin distributions in Fig.~\ref{fig:violin_rss} further reveal heavy-tailed error distributions with large inter-scenario variance. Some scenarios produce biases exceeding $-50$~dB, corresponding to order-of-magnitude power prediction errors. The failure is equally severe in the angular and delay domains. Both baselines yield APS shape cosine values below 0.32 (Table~\ref{tab:aps_comparison}) and PDP shape cosine values below 0.18 (Table~\ref{tab:pdp_comparison}). These shape cosine values indicate that the predicted profiles are essentially random with respect to the ground truth.

\begin{figure}[ht]
  \centering
  \includegraphics[width=\linewidth, height=0.38\textheight, keepaspectratio]{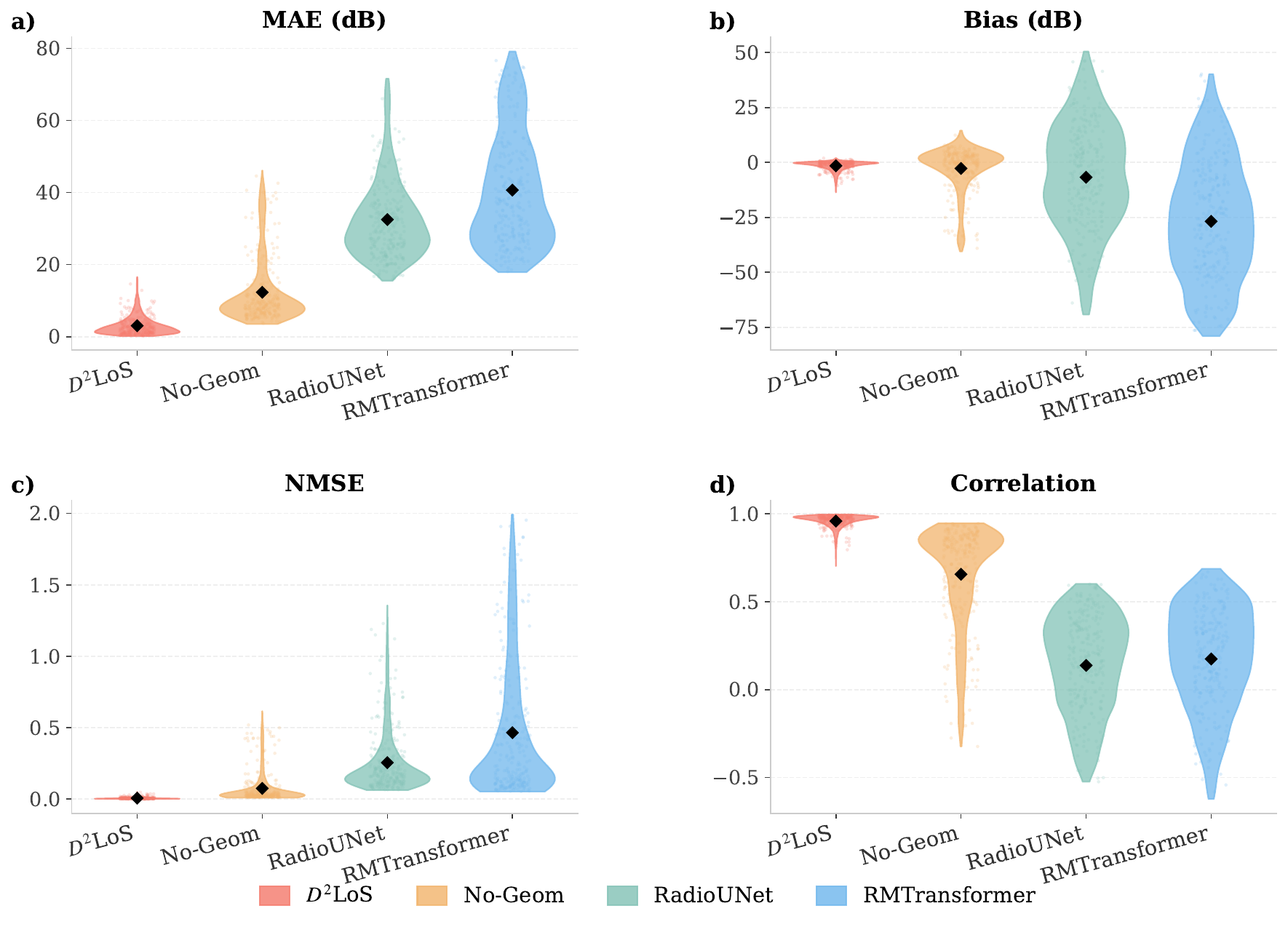}
  \caption{\textbf{Distribution of RSS prediction errors across 100 test scenarios.} Violin plots show the per-scenario distribution of six metrics: \textbf{a)} MAE, \textbf{b)} RMSE, \textbf{c)} bias, \textbf{d)} MSE, \textbf{e)} NMSE, and \textbf{f)} Pearson correlation. Black diamonds indicate the mean. D$^2$LoS exhibits both the lowest mean error and the smallest variance. RadioUNet and RMTransformer show wide distributions with heavy tails.}
  \label{fig:violin_rss}
\end{figure}

\begin{figure}[ht]
  \centering
  \includegraphics[width=\linewidth, height=0.38\textheight, keepaspectratio]{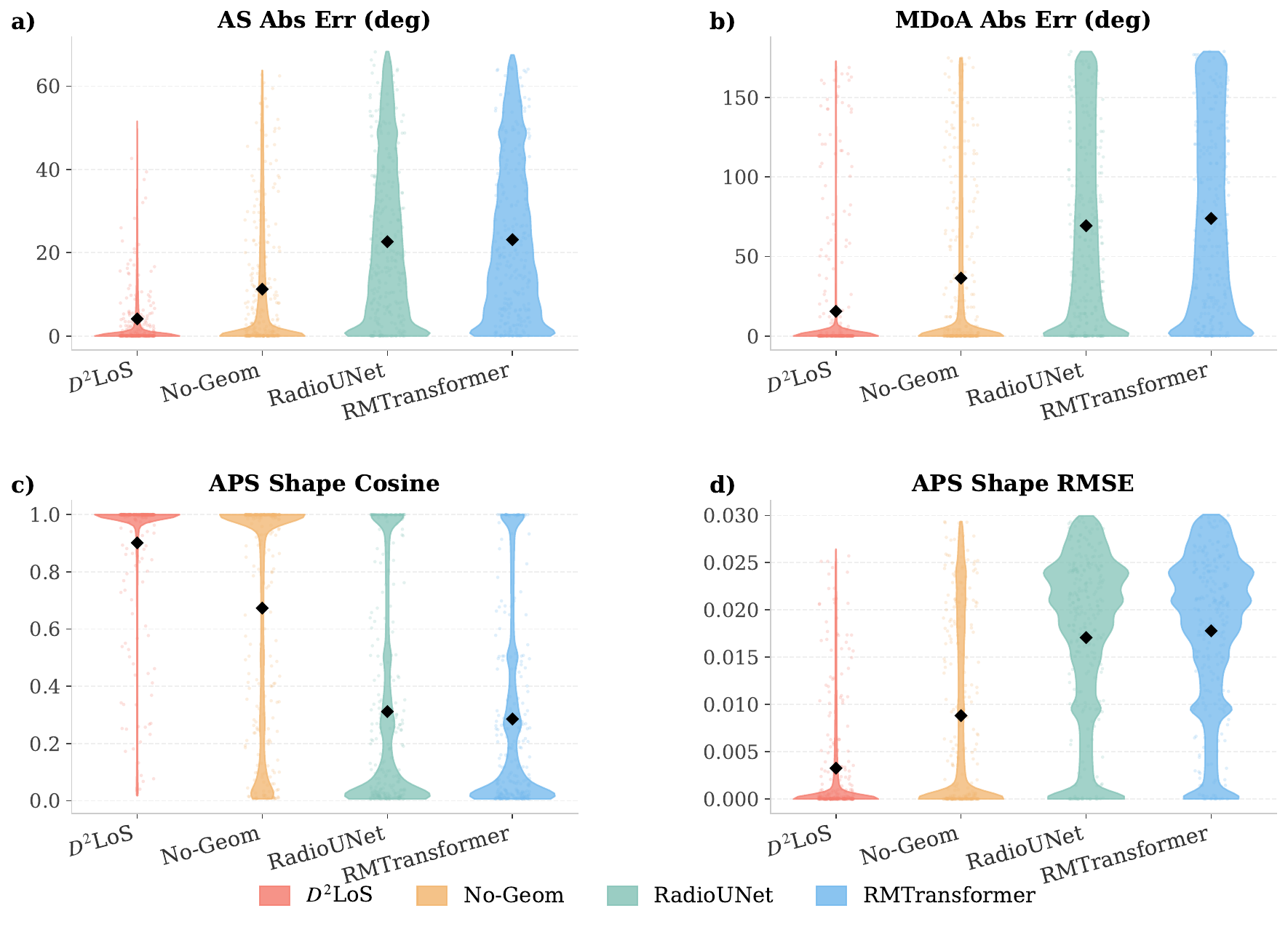}
  \caption{\textbf{Distribution of APS prediction errors across 100 test scenarios.} Violin plots show: \textbf{a)} angular spread error, \textbf{b)} MDoA error, \textbf{c)} APS shape cosine, and \textbf{d)} APS shape RMSE. D$^2$LoS achieves concentrated low-error distributions. The No-Geom variant shows moderate degradation. RadioUNet and RMTransformer produce near-random angular predictions.}
  \label{fig:violin_aps}
\end{figure}

\begin{figure}[ht]
  \centering
  \includegraphics[width=\linewidth, height=0.38\textheight, keepaspectratio]{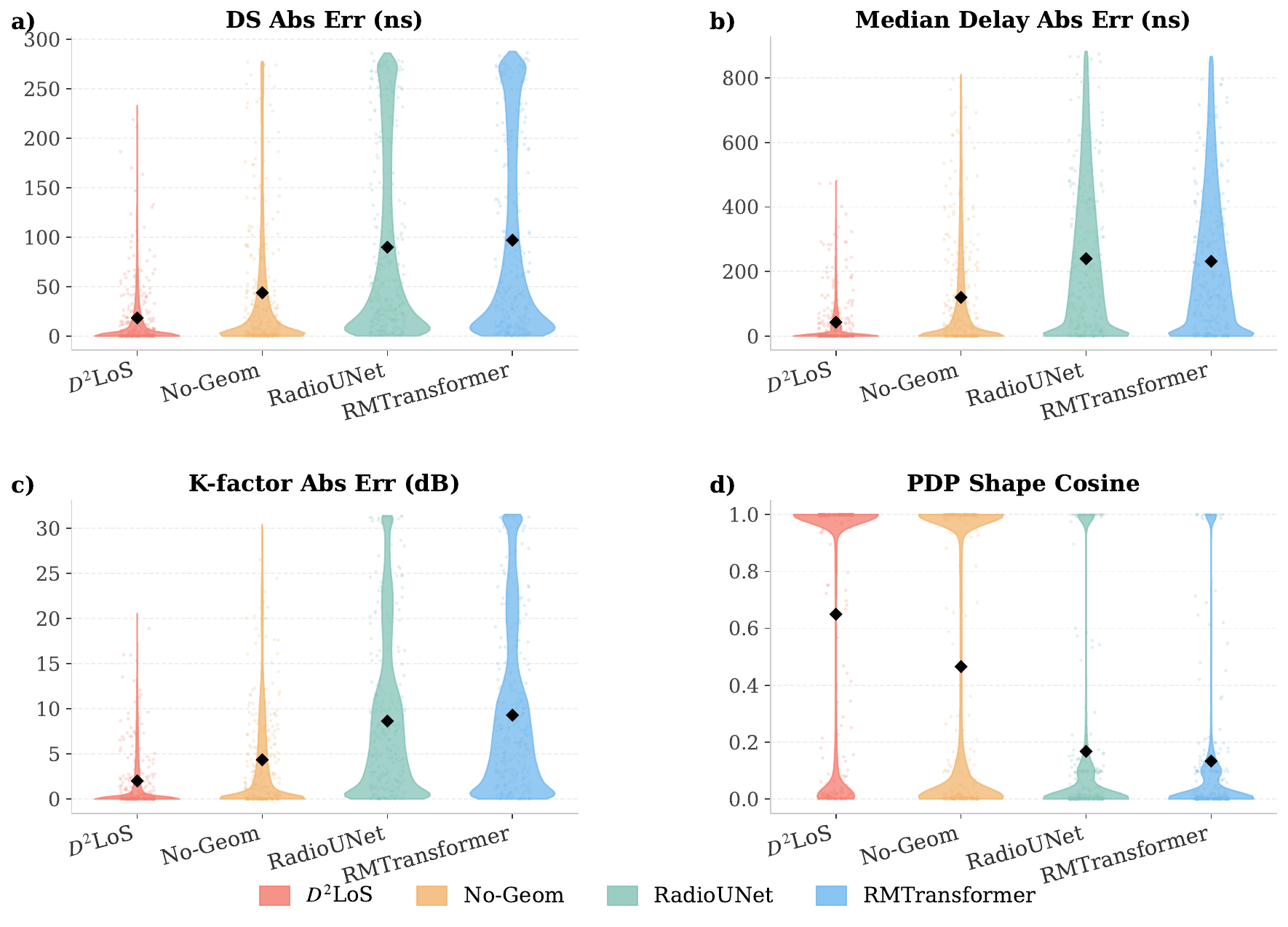}
  \caption{\textbf{Distribution of PDP prediction errors across 100 test scenarios.} Violin plots show: \textbf{a)} delay spread error, \textbf{b)} median delay error, \textbf{c)} $K$-factor error, \textbf{d)} effective count error, \textbf{e)} PDP shape cosine, and \textbf{f)} PDP shape RMSE. D$^2$LoS maintains compact error distributions. The gap between D$^2$LoS and No-Geom is larger in the delay domain than in the angular domain.}
  \label{fig:violin_pdp}
\end{figure}

The qualitative results reveal the underlying cause. The APS curves in Fig.~\ref{fig:aps} show noisy angular profiles with incorrect peak positions and power levels for both baselines. The PDP curves in Fig.~\ref{fig:pdp} exhibit spurious peaks and incorrect power floors, particularly at large delays where reflected and diffracted paths dominate. The RSS maps in Fig.~\ref{fig:rss} further illustrate this failure mode. RadioUNet produces heavily smoothed maps that erase all shadow boundary detail. RMTransformer generates narrow beam-like artifacts unrelated to physical propagation. These patterns are consistent with the spectral bias of neural networks, which favors low-frequency functions during training. The resulting blurred LoS boundaries remove the sharp spatial transitions that define shadow regions in urban environments. This spectral bias is the fundamental reason why pixel-level formulations cannot reconstruct the fine-grained LoS structure required for accurate channel modeling.

\subsection*{Per-ray angular information enables post-hoc beam pattern integration}

A key advantage of ray-level output is that arbitrary transmit antenna patterns can be applied after the propagation simulation without re-execution. For each retained ray, the framework stores the AoD in both azimuth and elevation. A directional antenna gain $G_\mathrm{ant}(\phi_\mathrm{AoD}, \theta_\mathrm{AoD})$ is then multiplied onto each ray's complex field before power summation. This produces beam-specific RSS, APS, and PDP maps from the same underlying ray data. We demonstrate this capability with a 15$^\circ$ half-power beamwidth pattern. Fig.~\ref{fig:rss_beam15} shows the resulting RSS maps for the same four scenarios as Fig.~\ref{fig:rss}. The directional pattern concentrates energy along the boresight and suppresses off-axis multipath components. D$^2$LoS accurately captures this directional filtering effect. It preserves both the beam footprint and the shadow structure behind buildings. No-Geom produces blurred beam edges due to inaccurate LoS boundaries. RadioUNet and RMTransformer fail to generate physically meaningful directional power distributions.

\begin{figure}[ht]
  \centering
  \includegraphics[width=\linewidth, height=0.38\textheight, keepaspectratio]{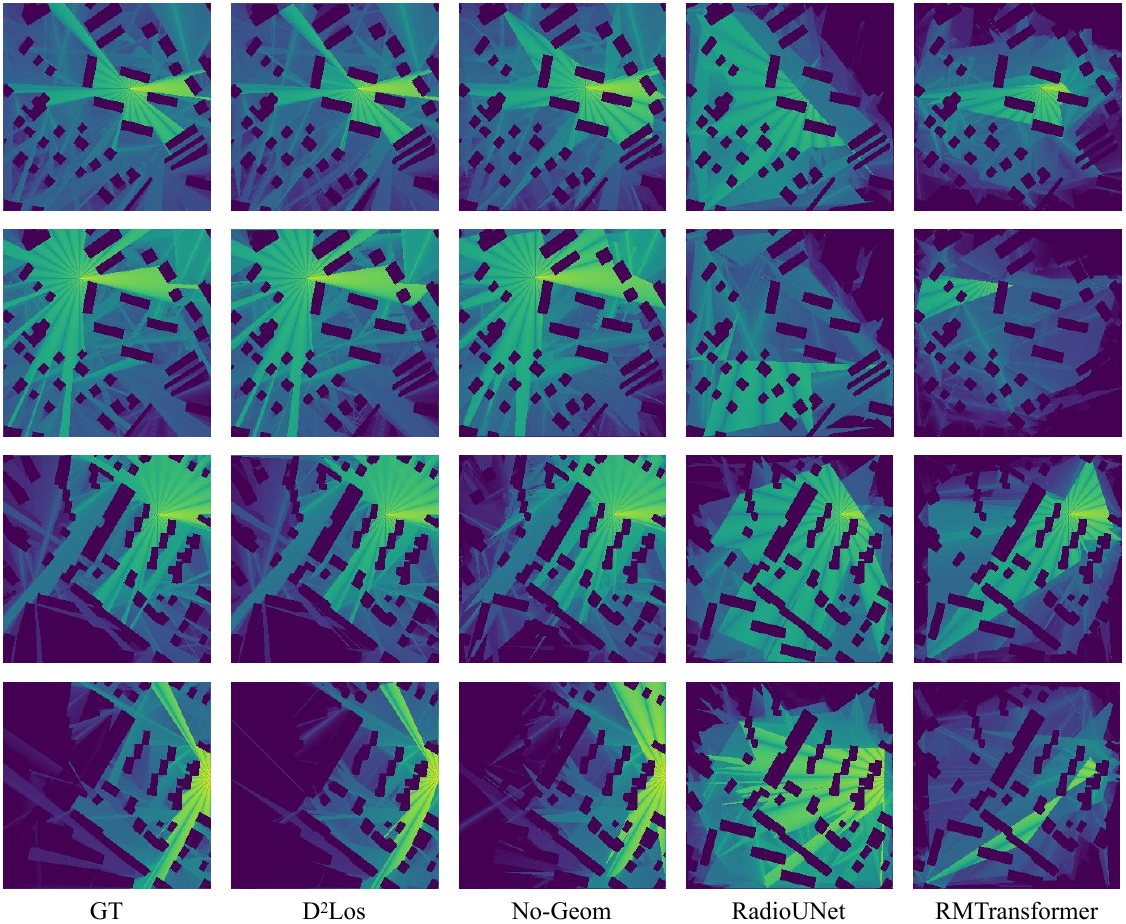}
  \caption{\textbf{RSS radio maps with a 15$^\circ$ directional beam pattern.} The same scenarios as Fig.~\ref{fig:rss} are shown after post-hoc beam integration. D$^2$LoS accurately captures the directional filtering effect, preserving both the beam footprint and shadow structure. RadioUNet and RMTransformer fail to produce meaningful directional patterns.}
  \label{fig:rss_beam15}
\end{figure}

\begin{figure*}[t]
  \centering
  \includegraphics[width=0.32\textwidth]{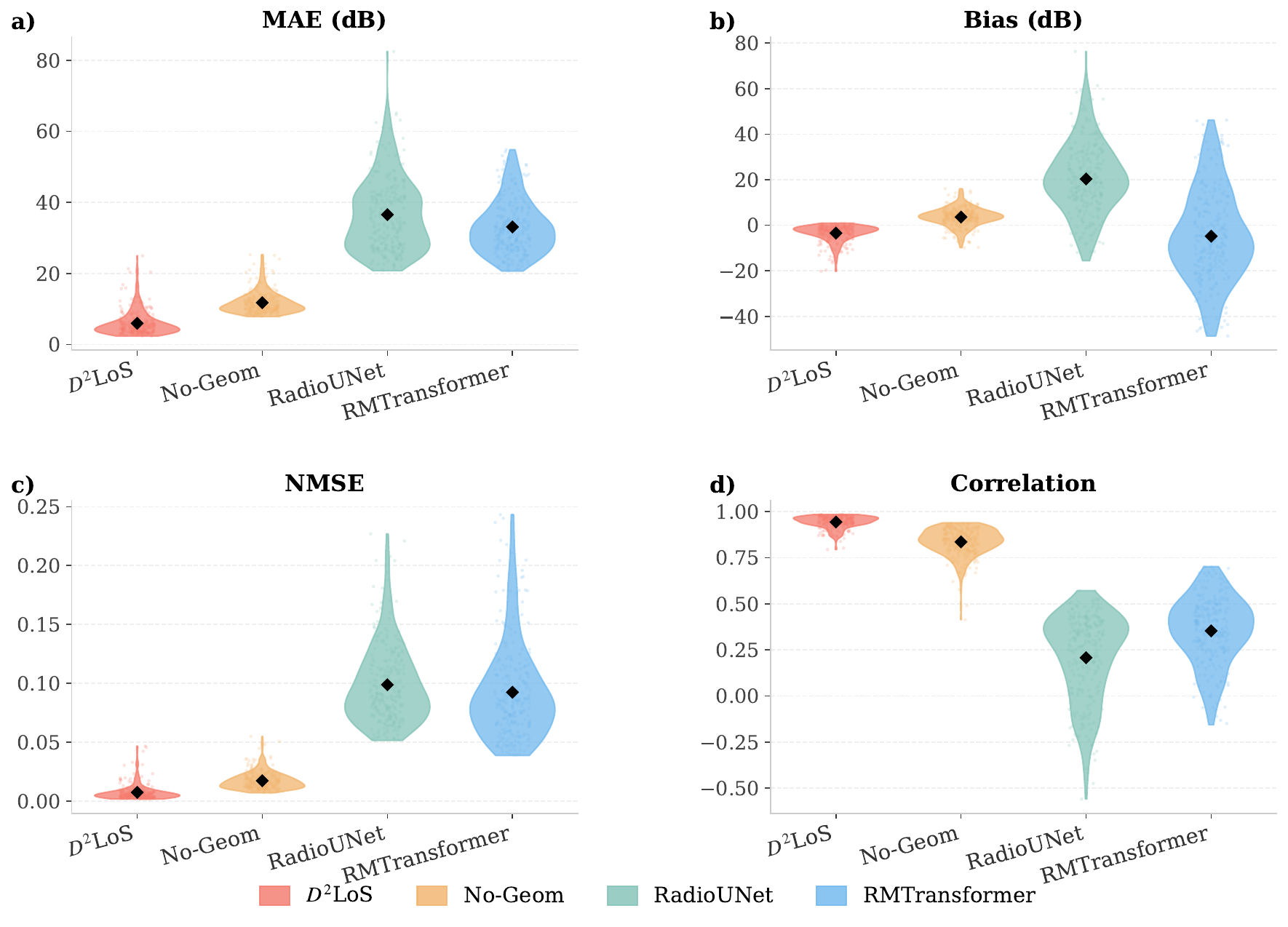}
  \includegraphics[width=0.32\textwidth]{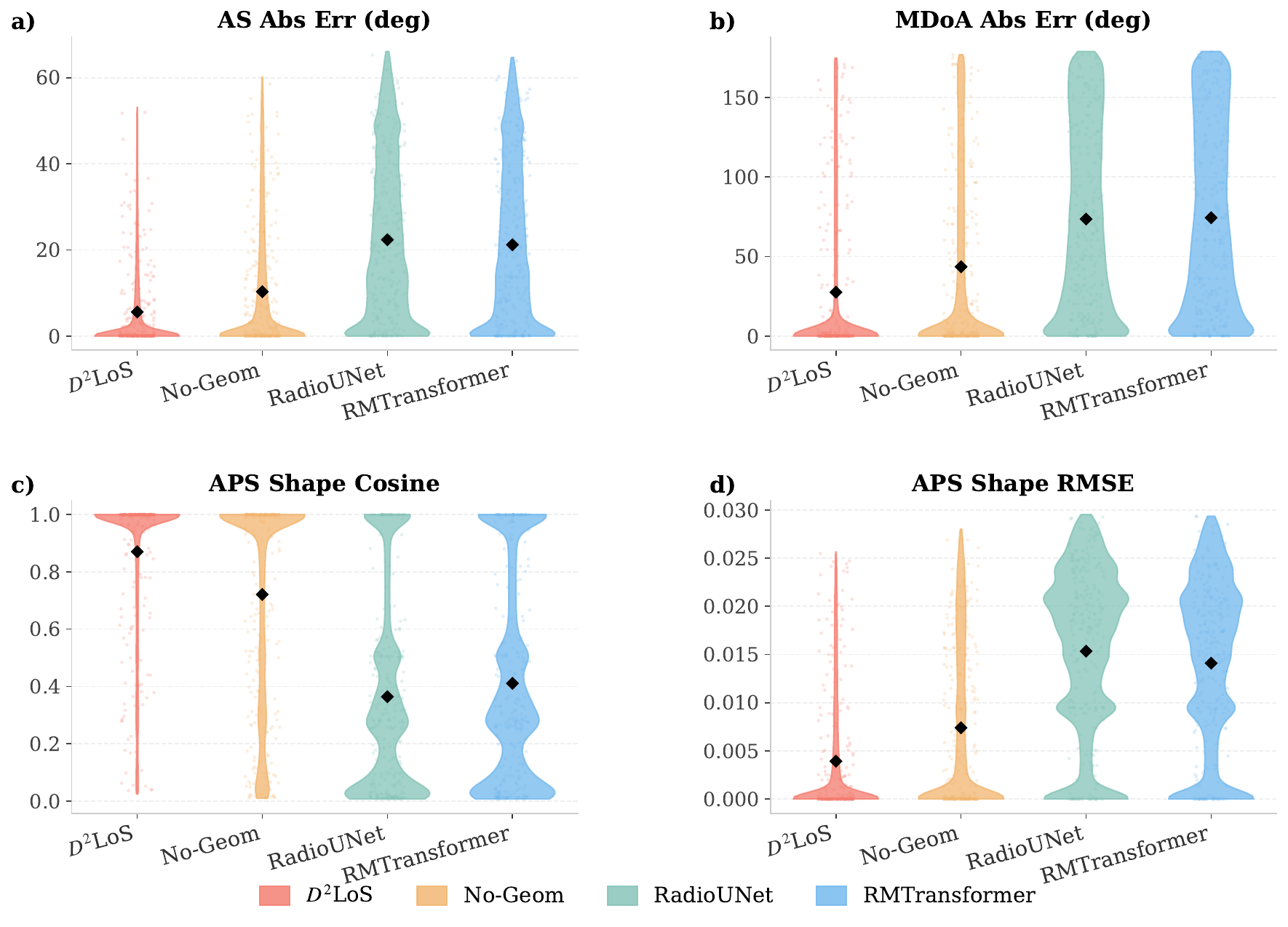}
  \includegraphics[width=0.32\textwidth]{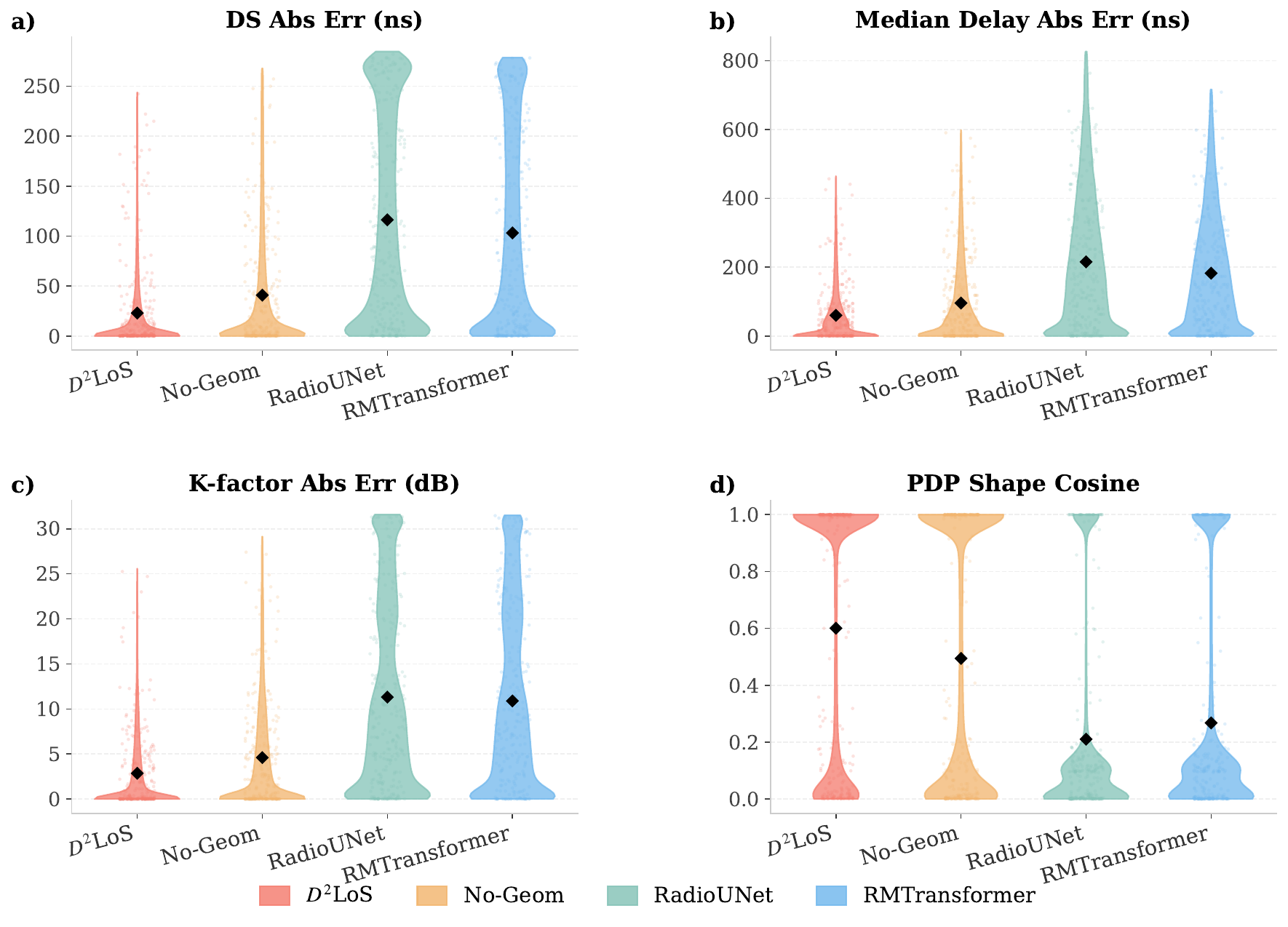}
  \caption{\textbf{Prediction accuracy with a 15$^\circ$ half-power beamwidth antenna pattern.} Violin plots show per-scenario error distributions after beam pattern integration using per-ray AoD. Left: RSS metrics. Center: APS metrics. Right: PDP metrics. D$^2$LoS maintains compact, low-error distributions consistent with the omnidirectional results. Additional beamwidths (30$^\circ$, 45$^\circ$, 60$^\circ$) are provided in the Supplementary Information.}
  \label{fig:violin_beam15}
\end{figure*}

The quantitative evaluation in Fig.~\ref{fig:violin_beam15} confirms these observations across all three channel dimensions. D$^2$LoS maintains compact, low-error distributions consistent with the omnidirectional results. The performance ranking among methods is preserved after beam integration. This indicates that applying a directional pattern does not amplify the relative errors introduced by LoS prediction inaccuracy. Additional results with 30$^\circ$, 45$^\circ$, and 60$^\circ$ beamwidths are provided in the Supplementary Information and show the same trend. This post-hoc integration capability is unavailable to conventional radio map methods that output only aggregate received power. Such methods do not preserve per-ray angular information and therefore cannot apply antenna patterns or evaluate beam-specific coverage after the fact. The ray-level output of D$^2$LoS thus offers a practical advantage for beam management and antenna deployment planning in 6G systems.

\subsection*{D$^2$LoS accelerates visibility computation}

We compare the end-to-end processing time of D$^2$LoS against the traditional RT pipeline. The traditional pipeline uses the rotational sweep algorithm for LoS preprocessing. Both pipelines share the same breadth-first search (BFS) path search \cite{beamer2013direction} and UTD field computation backend. All experiments are conducted on a single NVIDIA RTX Pro 6000 GPU. Fig.~\ref{fig:efficiency}a shows the per-scenario processing time for ten test maps. The traditional pipeline requires 15 to 48~min per scenario, while D$^2$LoS completes the same computation in under 1~min. The speedup factors range from 25$\times$ to 71$\times$, with the variation attributable to differences in scenario complexity. Fig.~\ref{fig:efficiency}b shows the cumulative distribution across all test scenarios. The median processing time is 0.5~min for D$^2$LoS and 24.9~min for the traditional pipeline, representing a median speedup of approximately 50$\times$.

\begin{figure}[ht]
  \centering
  \includegraphics[width=\linewidth, height=0.38\textheight, keepaspectratio]{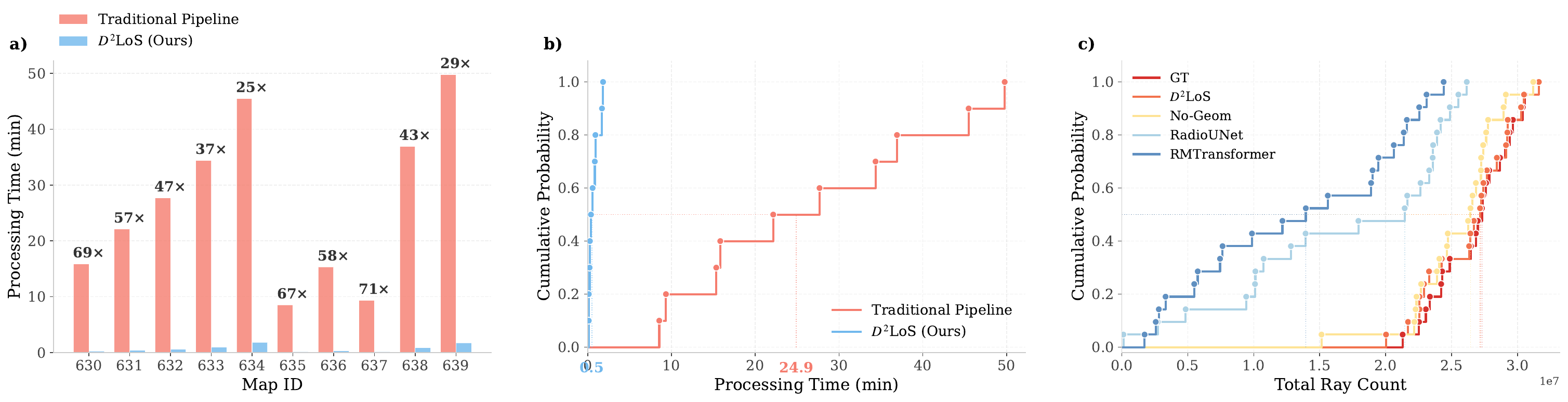}
  \caption{\textbf{Computational efficiency comparison between D$^2$LoS and the traditional RT pipeline.} \textbf{a)} Per-scenario processing time for ten test maps. Speedup factors range from 25$\times$ to 71$\times$. \textbf{b)} Cumulative distribution of processing time. The median is 0.5~min for D$^2$LoS and 24.9~min for the traditional pipeline. \textbf{c)} Cumulative distribution of the total ray count per scenario. All experiments use a single NVIDIA RTX Pro 6000 GPU.}
  \label{fig:efficiency}
\end{figure}

This acceleration is achieved without sacrificing output completeness. The D$^2$LoS pipeline produces the same ray-level output format as the traditional pipeline. Each ray stores complex gain, AoA, AoD, propagation delay, and complete geometric trajectory with 3D reflection-point coordinates. Fig.~\ref{fig:efficiency}c shows the cumulative distribution of the total ray count per scenario. The D$^2$LoS distribution closely matches the ground truth, confirming that the learned LoS prediction preserves the ray-level completeness of the full RT pipeline. Each ray is fully characterized for downstream applications, including coherent superposition, MIMO channel matrix synthesis, and cross-band CSI generation. The combination of over 25$\times$ speedup and full ray-level output makes D$^2$LoS a practical drop-in replacement for the LoS preprocessing stage in existing RT workflows.

\section{Discussion}

The results demonstrate that accurate LoS prediction is sufficient to produce high-fidelity multi-dimensional radio maps. In the ray tracing pipeline, LoS determination serves as the gating step for all downstream computations. A correct LoS map ensures that both the direct path and the first-level vertices of the BFS visibility tree are correctly identified. When LoS boundaries are inaccurate, errors propagate multiplicatively through the multi-bounce path search, corrupting all channel dimensions simultaneously. This mechanism explains why geometric post-processing has a stronger impact on delay-domain metrics than on angular-domain metrics: a small boundary shift may preserve the approximate arrival angle of a reflected path, but it directly alters the total propagation distance, producing systematic delay errors that accumulate across multiple bounces.

The complete failure of RadioUNet and RMTransformer highlights a fundamental limitation of pixel-level LoS prediction. Neural networks exhibit spectral bias, favoring low-frequency functions during training. The resulting blurred LoS boundaries remove the sharp shadow edges that define urban multipath structure. D$^2$LoS circumvents this by predicting sparse vertex attributes instead of dense pixel maps. Binary visibility labels and approximate projection coordinates are both low-frequency targets. The sharp LoS boundary is then reconstructed analytically, preserving the geometric precision that downstream ray tracing requires.

A key advantage of the LoS-based decomposition is its inherent frequency independence. The LoS computation depends only on building geometry, not on carrier frequency. Frequency enters solely through the UTD interaction coefficients, which can be updated without re-executing visibility computation or path search. This distinguishes D$^2$LoS from existing radio map methods that require retraining for each frequency band. Furthermore, the per-ray angular output enables post-hoc antenna pattern integration, as demonstrated by the beam-specific results across 15$^\circ$, 30$^\circ$, 45$^\circ$, and 60$^\circ$ beamwidths. The ability to evaluate arbitrary beam patterns from a single ray tracing computation provides substantial time savings for antenna deployment planning in multi-band 6G systems.

The RayVerse-100 dataset provides ray-level ground truth that existing datasets do not offer. Neither RadioMapSeer nor DeepMIMO exposes explicit per-ray geometric trajectories with 3D reflection-point coordinates. This trajectory information is essential for environment-aware beam tracking, RIS placement optimization, and radar-communication co-design.

Several limitations should be noted. The 2.5D building model introduces errors for complex rooftop geometries. The 2D$\times$2D decomposition excludes over-rooftop diffraction paths. The evaluation is conducted at street-level height (1.5~m) and retains only the top-8 rays per point. The geometric post-processing uses a fixed search radius that may benefit from adaptive scaling. Future work will extend the building model, incorporate over-rooftop diffraction, and integrate measured channel data through transfer learning to bridge the gap between simulation and real-world deployment.

\section{Methods}
 
\subsection*{Environment representation and problem formulation}
\begin{figure*}[t]
    \centering
    \includegraphics[width=\textwidth]{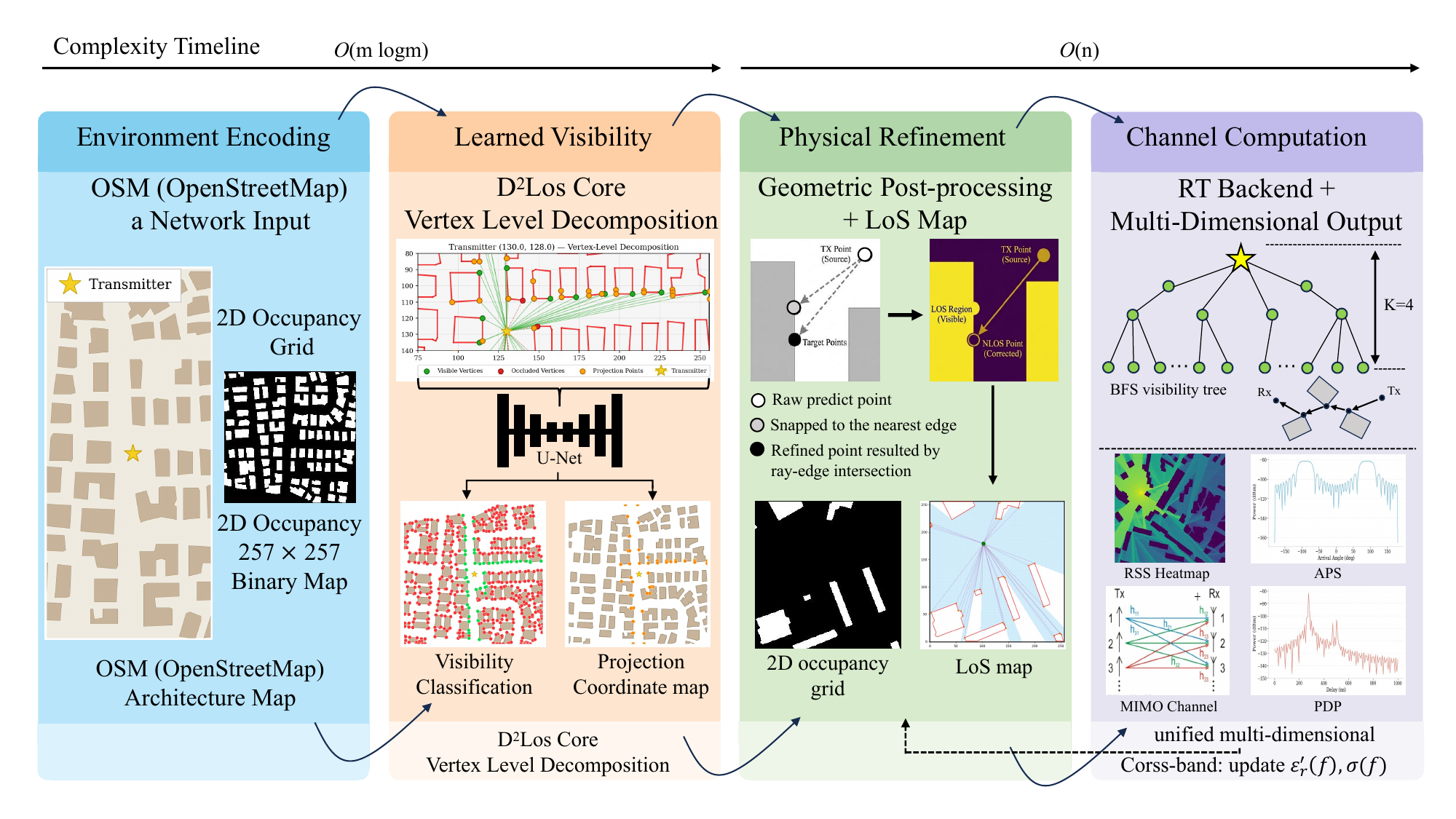}
    \caption{\textbf{Overview of the D$^2$LoS framework.} The pipeline takes a binary occupancy map and transmitter position as input. The D$^2$LoS network predicts vertex-level visibility and projection coordinates. Geometric post-processing enforces physical constraints to produce an exact LoS map. A BFS path search and UTD field computation yield per-ray channel parameters. Cross-band information is obtained by updating material parameters without re-executing visibility computation.}
    \label{fig:framework}
\end{figure*}
The propagation environment is encoded as a top-down binary occupancy map $E \in \{0,1\}^{H \times W}$ with $H = W = 257$ pixels at a spatial resolution of $\Delta r = 1$~m. Building footprints are extracted from OpenStreetMap (OSM) and approximated as quadrilateral prisms. The vertices of each footprint polygon serve as candidate reflection and diffraction points. Per-building heights are obtained from OSM attributes when available; a default value of 20~m is assigned otherwise.
 
Given a transmitter at $\mathbf{p}_s = (x_s, y_s)$, the goal is to determine a binary LoS map $L \in \{0,1\}^{H \times W}$, where $L(x_r, y_r) = 1$ indicates that pixel $(x_r, y_r)$ is visible from $\mathbf{p}_s$. The network input is a $257 \times 257 \times 4$ tensor comprising: the binary occupancy map $E$, a Gaussian-blurred transmitter location heatmap $H_\mathrm{Tx}$ with $\sigma = 3$ pixels, and two spatial coordinate grids $X_\mathrm{coord}$ and $Y_\mathrm{coord}$.
 
The ray tracing pipeline is decomposed into two stages (Fig.~\ref{fig:framework}): LoS preprocessing, which determines mutual visibility among all geometric primitives, and BFS multi-bounce path search, which enumerates valid propagation paths. D$^2$LoS replaces the LoS preprocessing stage while preserving the standard path search and UTD field computation backend.

\subsection*{Vertex-level LoS reformulation}
 
In a 2D environment composed of polygonal obstacles, the LoS region boundary as seen from a point source $\mathbf{p}_s$ is piecewise-linear. Each linear segment is defined by a building vertex $\mathbf{v}_i$ visible from $\mathbf{p}_s$ and its projection point $\mathbf{v}_i^\mathrm{proj}$ on a more distant building edge. The projection point is computed via ray casting: a ray from $\mathbf{p}_s$ passes through $\mathbf{v}_i$ and extends until it first intersects a building edge $e_j$ that is farther from $\mathbf{p}_s$ and not parallel to the ray:
\begin{equation}
\mathbf{v}_i^\mathrm{proj} = \mathbf{p}_s + t_i^* \cdot (\mathbf{v}_i - \mathbf{p}_s), \quad
t_i^* = \min_j \big\{ t > 1 \;\big|\; (\mathbf{p}_s + t \cdot \mathbf{d}_i) \in e_j,\; \mathbf{d}_i \times \mathbf{n}_{e_j} \neq 0 \big\},
\label{eq:projection}
\end{equation}
where $\mathbf{d}_i = \mathbf{v}_i - \mathbf{p}_s$ and $\mathbf{n}_{e_j}$ is the direction vector of edge $e_j$.
 
Two prediction targets are defined for each building vertex $\mathbf{v}_i$: a binary visibility label $c_i \in \{0, 1\}$ indicating whether $\mathbf{v}_i$ is visible from $\mathbf{p}_s$, and the projection point coordinates $\hat{\mathbf{v}}_i^\mathrm{proj} = (\hat{x}_i^\mathrm{proj}, \hat{y}_i^\mathrm{proj})$, defaulting to $\mathbf{v}_i$ for non-boundary vertices. All $M$ building vertices are predicted simultaneously, avoiding the need to pre-select which vertices determine the LoS boundary. The network predicts low-frequency attributes rather than a dense high-frequency boundary map. The sharp LoS boundary is reconstructed analytically from the predicted vertex attributes in the post-processing stage.

\subsection*{Network architecture and loss function}
 
We employ a U-Net encoder-decoder with two modifications at the bottleneck layer. Atrous spatial pyramid pooling (ASPP) applies dilated convolutions with rates $\{1, 6, 12, 18\}$ to capture occlusion effects at multiple spatial scales. Spectral feature modulation follows the Fourier neural operator (FNO) paradigm: spatial features are transformed to the frequency domain via real FFT, modulated by learned complex-valued weights, and transformed back via inverse FFT. This provides a global receptive field in a single layer and captures long-range shadow connectivity that local convolutions cannot efficiently model.
 
The decoder produces two output heads via sub-pixel upsampling with coordinate attention: a visibility classification map $\hat{M} \in [0,1]^{H \times W}$ and normalised projection coordinates $\hat{P} \in [0,1]^{H \times W \times 2}$. During inference, predictions are extracted at known building vertex pixel locations, effectively implementing sparse vertex-level prediction through a dense network architecture.
 
The composite loss function is
\begin{equation}
\mathcal{L}_\mathrm{total} = \mathcal{L}_\mathrm{Focal} + \lambda_1 \mathcal{L}_\mathrm{Dice} + \lambda_2 \mathcal{L}_\mathrm{MaskedL1}.
\label{eq:loss}
\end{equation}
Focal loss with $\gamma = 2.0$ addresses class imbalance between visible and occluded vertices. Dice loss with $\lambda_1 = 1.0$ enforces spatial connectivity of predicted LoS regions. The coordinate regression head uses a masked L1 loss computed only at building-vertex pixels:
\begin{equation}
\mathcal{L}_\mathrm{MaskedL1} = \frac{\sum \|\hat{P} - P^\mathrm{GT}\|_1 \cdot \mathbf{1}(M^\mathrm{GT} > 0.5)}{\sum \mathbf{1}(M^\mathrm{GT} > 0.5)},
\label{eq:maskedl1}
\end{equation}
ensuring that representational capacity concentrates on geometrically meaningful boundary regions. We set $\lambda_2 = 20.0$ based on hyperparameter search.

\subsection*{Geometric post-processing and 3D extension}
 
Raw neural network predictions contain residual regression errors in the projection coordinates. We exploit the physical constraint that projection points must lie on building edges through a two-step refinement.
 
In Step~1, for each predicted projection point $\hat{\mathbf{v}}_i^\mathrm{proj}$, the nearest building edge $e_j^*$ within a search radius $R_\mathrm{search} = 5$~m is identified. In Step~2, the intersection of the ray from $\mathbf{p}_s$ through vertex $\mathbf{v}_i$ with edge $e_j^*$ is computed:
\begin{equation}
\mathbf{v}_i^\mathrm{refined} = \mathbf{p}_s + t^* (\mathbf{v}_i - \mathbf{p}_s), \quad
t^* = \frac{(\mathbf{a} - \mathbf{p}_s) \times (\mathbf{b} - \mathbf{a})}{(\mathbf{v}_i - \mathbf{p}_s) \times (\mathbf{b} - \mathbf{a})},
\label{eq:refinement}
\end{equation}
where $\mathbf{a}$ and $\mathbf{b}$ are the endpoints of $e_j^*$. The refined point is accepted if $t^* > 1$ and the intersection lies within the edge segment; otherwise, the raw prediction is retained. A global path-clearance check verifies that the refined ray does not intersect other buildings.
 
The complete LoS map is reconstructed by selecting boundary vertices $V_\mathrm{bnd} = \{\mathbf{v}_i \mid c_i = 1,\, \mathbf{v}_i^\mathrm{refined} \neq \mathbf{v}_i\}$, constructing segments $[\mathbf{v}_i, \mathbf{v}_i^\mathrm{refined}]$, sorting them angularly with respect to $\mathbf{p}_s$, taking the union of unobstructed angular sectors, and rasterising onto the grid to produce $\hat{L} \in \{0,1\}^{H \times W}$.
 
To extend to 3D visibility, we adopt the 2D$\times$2D decomposition:
\begin{equation}
\mathrm{LoS}_\mathrm{3D}(\mathbf{p}_s, \mathbf{p}_r) \approx \mathrm{LoS}_\mathrm{2D\text{-}H}(\mathbf{p}_s, \mathbf{p}_r) \;\wedge\; \mathrm{LoS}_\mathrm{2D\text{-}V}(\mathbf{p}_s, \mathbf{p}_r).
\label{eq:3d}
\end{equation}
The horizontal check $\mathrm{LoS}_\mathrm{2D\text{-}H}$ is computed by D$^2$LoS. The vertical check $\mathrm{LoS}_\mathrm{2D\text{-}V}$ evaluates whether the elevation angle of the direct path clears the heights of all intervening buildings. Under the 2.5D extruded prism model, this decomposition is mathematically exact (Supplementary Theorem~S3). The transmitter and receiver heights are input parameters; the current evaluation uses $h_\mathrm{tx} = h_\mathrm{rx} = 1.5$~m. Over-rooftop diffraction paths are excluded from the current framework, as discussed in the limitations.

\subsection*{Ray tracing backend}
 
With LoS relationships determined, an $N$-ary visibility tree is constructed rooted at the transmitter. Vertex-to-vertex LoS adjacency is precomputed once per scenario using the exact rotational sweep algorithm. D$^2$LoS provides transmitter-to-vertex visibility, forming the first level of the tree. Multi-bounce paths are enumerated via breadth-first search (BFS) with a maximum depth of $K = 4$, covering up to four combined reflections and diffractions. The search is executed with GPU-parallel graph traversal.
 
For each valid path, the complex electric field is computed as the product of interaction coefficients and free-space propagation factors along all segments. Interaction coefficients include Fresnel reflection coefficients and UTD diffraction coefficients following the Luebbers lossy-dielectric extension. Two numerical stability mechanisms are applied: an envelope clamp ensuring that diffracted field magnitude does not exceed half the incident field, and a forward-scatter smoothing for deflection angles below 30$^\circ$ (Supplementary Theorem~S2 and Supplementary Remark~1). Default material parameters at the reference frequency of 3.5~GHz are $\epsilon_r' = 5.31$ and $\sigma = 0.0326$~S/m.
 
At each evaluation point, all arriving rays are ranked by path gain and the top $N_\mathrm{ray} = 8$ rays are retained, capturing over 98\% of the total received energy across all scenarios. Each retained ray stores: complex gain, AoA and AoD in azimuth and elevation, propagation delay, complete geometric trajectory with 3D interaction-point coordinates, interaction type at each point, and material index of each interacting surface. The RSS output is computed via incoherent power summation over all retained rays. Because per-ray phase information is fully preserved, coherent superposition is available through the post-processing interface.
 
The geometric ray paths are frequency-independent. To obtain channel information at a different carrier frequency, users specify the target frequency and corresponding material parameters $\epsilon_r'(f)$ and $\sigma(f)$. The framework re-evaluates all interaction coefficients along the stored geometric trajectories without re-executing visibility computation or path search. Similar to \cite{alkhateeb2019deepmimo}, the narrowband MIMO channel matrix for an $N_t \times N_r$ array system is
\begin{equation}
\mathbf{H}(f) = \sum_{p=1}^{N_\mathrm{ray}} \alpha_p \, \mathbf{a}_R(\theta_{R,p}, \phi_{R,p}) \, \mathbf{a}_T^H(\theta_{T,p}, \phi_{T,p}) \, e^{-j 2\pi f \tau_p},
\label{eq:mimo}
\end{equation}
where $\alpha_p$ is the complex path gain, $\tau_p$ is the delay, and $\mathbf{a}_R$, $\mathbf{a}_T$ are the receive and transmit array steering vectors.

\subsection*{Dataset construction and training configuration}
 
To train D$^2$LoS, we extracted 701 urban topographies from OSM covering diverse building densities and street layouts. Each environment is rasterised into a $257 \times 257$ grid. For each scenario, multiple transmitter positions are sampled and the ground-truth LoS map and projection point coordinates are computed using the exact rotational sweep algorithm. The 701 scenarios are strictly partitioned: scenarios 0--630 for training and validation, and scenarios 631--700 as a completely unseen test set.
 
To provide ray-level ground truth for end-to-end evaluation, we constructed RayVerse-100 by selecting 100 scenarios from geographically distinct regions. For each scenario, the full RT pipeline is executed: exact LoS computation, BFS path enumeration with $K = 4$, and UTD field computation at every grid point. The output contains the top-8 per-ray multipath parameters. The RayVerse-100 scenarios are drawn from the test partition to ensure no overlap with D$^2$LoS training data. Built on UTD, the dataset exposes open material parameter interfaces for cross-band CSI generation. Post-processing interfaces for coherent superposition, MIMO channel matrix synthesis, and antenna pattern integration are provided.
 
D$^2$LoS is trained on NVIDIA RTX 4090 and A100 GPUs using AdamW with a cosine annealing learning rate schedule. The initial learning rate is $1.8 \times 10^{-3}$, the batch size is 64, and training runs for 120 epochs. All metrics reported in Tables~\ref{tab:rss_comparison}--\ref{tab:pdp_comparison} are computed as mean $\pm$ standard deviation over the 100 independent test scenarios. No scenario in the test set overlaps with the training data geographically or topologically.

\begin{appendices}
\section*{Supplementary Theorem S1: Complexity Analysis of D$^2$LoS}

\begin{theorem}[Computational complexity reduction]
\label{thm:complexity}
Given $m$ building vertices and $n$ evaluation points, D$^2$LoS reduces the per-transmitter LoS preprocessing complexity from $O(m \log m + n)$ to $O(M \log M + n)$, where $M$ is the number of predicted boundary vertices. In typical urban scenarios, $M \ll m$ and $M \log M \ll n$, so the effective per-transmitter cost is $O(n)$. Under GPU parallelisation with sufficient cores, the wall-clock time is bounded by a constant independent of both $m$ and $n$.
\end{theorem}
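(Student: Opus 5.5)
The plan is to count operations for each stage of the two pipelines, and to treat the network forward pass as a cost that depends only on the grid size, which is fixed at $H=W=257$.

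\textbf{Classical baseline.} First I would establish the $O(m\log m + n)$ figure. The rotational sweep sorts all $m$ vertices by angle around $\mathbf{p}_s$, which costs $O(m\log m)$. It then sweeps once while keeping the active edges in a balanced search tree, adding $O(\log m)$ per event, so $O(m\log m)$ overall. The output is a visibility polygon with $O(m)$ edges. Rasterising it onto the $n$ evaluation points costs $O(n)$: each pixel is assigned to an angular sector, and this can be amortised by scanning along rays or rows with a pointer that moves monotonically.

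\textbf{D$^2$LoS side.} Here the steps are as follows.
\begin{enumerate}
\item The forward pass has fixed architecture and a fixed $H\times W$ input. Its cost is therefore $O(HW)=O(n)$, and the FFT layer costs $O(n\log n)$ with $n$ fixed. I would either absorb this into a constant or state explicitly that the term is $O(n)$, treating the bottleneck FFT resolution as fixed.
\item Reading predictions at the $m$ vertex pixels costs $O(m)$. This is not $O(M)$, so I would note that $O(m)\subseteq O(n)$, because each vertex occupies a distinct pixel and hence $m\le n$.
\item Thresholding selects the set $V_\mathrm{bnd}$ of size $M$.
\item Snapping each boundary vertex to an edge within the fixed radius $R_\mathrm{search}=5$~m uses a uniform grid bucket of the edges. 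Each query inspects $O(1)$ edges under a bounded-local-density assumption, and evaluating the refinement formula for the refined point costs $O(1)$, so this step is $O(M)$ in total.
\item Angular sorting of the $M$ segments costs $O(M\log M)$. Taking the union of sectors is a linear sweep over the sorted list, costing $O(M)$.
\item Rasterisation costs $O(n)$, as in the baseline.
\end{enumerate}
Summing gives $O(M\log M + n)$. With $M\ll m$ and $M\log M\ll n$, this simplifies to $O(n)$.

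\textbf{Main obstacle.} The hardest step is the global path-clearance check. Done naively it costs $O(M\cdot m)$, which would break the bound. I would try to charge it to the rasterisation, since clearance can be read off the rasterised occupancy map along each of the $M$ segments. Each segment has length $O(\sqrt n)$ pixels, so this costs $O(M\sqrt n)$. To bring that under $O(n)$ I would need the additional assumption $M=O(\sqrt n)$, which I would state as part of the "typical urban scenario" hypothesis. Alternatively, I would argue that clearance is verified within the sorted sweep itself in amortised time. The clearance cost may in the end need to be stated as an assumption.

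\textbf{Parallel wall-clock claim.} I would argue that every stage except the sort is embarrassingly parallel: per-pixel convolutions and rasterisation, and per-vertex snapping. With at least $\max(n,M)$ cores, each of these stages runs in depth $O(1)$, given fixed network depth. A parallel sort has $O(\log M)$ depth. The constant bound therefore needs $M$ bounded by a scenario-independent constant, or the sort depth has to be treated as negligible. I would state the theorem's "sufficient cores" assumption in exactly this form, and note that the claim then holds in the PRAM depth sense.
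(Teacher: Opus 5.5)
Your proposal is correct under the hypotheses you state, and it follows essentially the same route as the paper's proof: inference on the fixed $257\times 257$ grid counted as a constant, $O(M)$ snapping via a precomputed uniform-grid edge index under bounded local edge density, an $O(M\log M)$ angular sort with $O(n)$ rasterisation, and a stage-by-stage GPU depth count. The items you make explicit are genuine omissions in the paper's argument, namely the $O(m)\subseteq O(n)$ readout at all vertex pixels, the cost of the global path-clearance check (which the paper's proof never counts), and the need for $M$ to be bounded by a scenario-independent constant (which the paper simply asserts when it calls $\log^2 M$ a constant independent of $m$); note also that the bounded-$M$ assumption is what makes your per-pixel rasterisation depth $O(1)$ rather than the paper's $O(\log M)$ binary search, and that the clearance walk must be split across pixels in the parallel count to avoid $O(\sqrt{n})$ depth.
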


\begin{proof}
\textit{Classical baseline.}
The rotational sweep algorithm computes the complete LoS polygon for a single transmitter in $O(m \log m)$: all $m$ edges are sorted angularly and swept. Evaluation of $n$ receiver points against this polygon costs $O(n)$ via angular binary search. The total per-transmitter cost is $O(m \log m + n)$.

\textit{D$^2$LoS cost decomposition.}
D$^2$LoS replaces the $O(m \log m)$ polygon construction with three operations.

\textit{Step 1: CNN inference.}
The network processes a fixed $H \times W$ input tensor with $H = W = 257$. For a U-Net with $L$ layers each of bounded filter size, the computation cost is $O(L \cdot H \cdot W)$. Since $L$, $H$, and $W$ are all fixed constants independent of $m$ and $n$, the total inference cost is a constant $C_\mathrm{CNN}$.

\textit{Step 2: Geometric post-processing.}
For each of the $M$ predicted boundary vertices, we identify the nearest building edge within a search radius $R_\mathrm{search}$. We assume a uniform grid-based spatial index is precomputed for each scenario in $O(|E|)$ time, where $|E|$ is the total number of building edges. Given this index, each query retrieves at most $k$ candidate edges, where $k \leq \pi R_\mathrm{search}^2 \cdot \rho_\mathrm{edge}$ and $\rho_\mathrm{edge}$ is the maximum local edge density. Since $\rho_\mathrm{edge}$ is bounded by the building geometry and $R_\mathrm{search}$ is a fixed constant, $k$ is bounded by a constant independent of $m$. Each ray-edge intersection test costs $O(1)$. The total cost of geometric post-processing is therefore $O(M \cdot k) = O(M)$.

\textit{Step 3: LoS boundary rasterisation.}
The $M$ boundary segments are sorted angularly with respect to $\mathbf{p}_s$ in $O(M \log M)$. The sorted boundary is then rasterised onto the $H \times W$ grid by scanline traversal in $O(n)$, where $n = H \times W$. The total rasterisation cost is $O(M \log M + n)$.

\textit{Total cost.}
Combining the three steps:
\begin{equation}
T_{\mathrm{D}^2\mathrm{LoS}} = C_\mathrm{CNN} + O(M) + O(M \log M + n) = O(M \log M + n).
\tag{S1}
\end{equation}

In typical urban scenarios, $m \approx 1500$--$3000$ and $M \approx 20$--$80$ (i.e., only 2--5\% of building vertices lie on the LoS boundary for a given transmitter). Therefore $M \log M \leq 80 \times \log_2(80) \approx 506$, while $n = 257^2 = 66{,}049$. Since $M \log M \ll n$, the effective complexity simplifies to $O(n)$, compared with $O(m \log m + n)$ for the classical method. The improvement factor in the polygon construction step is $m \log m / (M \log M) \geq 30\times$ for typical values.

\textit{GPU parallelisation.}
The three steps parallelise as follows.
CNN inference maps to tensor-core parallelism with wall-clock time determined by network depth, not spatial resolution. This is a constant $T_\mathrm{CNN}$ on any modern GPU.
The $M$ edge-snapping and ray-edge intersection operations are independent and execute in $O(1)$ wall-clock time with $M$ parallel threads.
Angular sorting of $M$ elements can be performed via parallel merge sort in $O(\log^2 M)$ wall-clock time.
The $n$ point-in-polygon evaluations are independent; each tests membership against the sorted angular boundary in $O(\log M)$ via binary search. With $n$ parallel threads, this step has $O(\log M)$ wall-clock time.

The total wall-clock time under GPU parallelisation is:
\begin{equation}
T_\mathrm{wall} = T_\mathrm{CNN} + O(1) + O(\log^2 M) + O(\log M) = O(T_\mathrm{CNN} + \log^2 M).
\tag{S2}
\end{equation}

Since both $T_\mathrm{CNN}$ and $\log^2 M$ are bounded constants independent of $m$ and $n$, the wall-clock time per transmitter is $O(1)$ with respect to the scenario size.
\end{proof}

\section*{Supplementary Theorem S2: UTD Numerical Stability and Field Continuity}

\begin{theorem}[Pointwise diffracted field bound and shadow boundary continuity]
\label{thm:utd}
Consider the Luebbers UTD diffraction coefficient $D$ for a lossy dielectric wedge with exterior angle $n\pi$. The following properties hold:

(a) At the shadow boundary ($\phi - \phi' = \pi$), the diffracted field satisfies $|E_d| = 0.5 \cdot |E_i|$, ensuring that the total field $E_\mathrm{total} = E_i + E_d$ is continuous across the boundary.

(b) For all observation angles, $|D|$ is bounded above by its value at the shadow boundary. Therefore, the envelope clamp $|D|_\mathrm{max} = 0.5$ does not alter the physical solution and serves only as a numerical safeguard.
\end{theorem}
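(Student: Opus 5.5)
The plan is to work from the explicit Kouyoumjian--Pathak form of the UTD coefficient with Luebbers' lossy-wedge extension,
\[
D = \frac{-e^{-j\pi/4}}{2n\sqrt{2\pi k}}\Big[\cot\!\Big(\tfrac{\pi+\beta^-}{2n}\Big)F(kLa^+(\beta^-)) + \cot\!\Big(\tfrac{\pi-\beta^-}{2n}\Big)F(kLa^-(\beta^-)) + R_0\cot(\cdot)F(\cdot) + R_n\cot(\cdot)F(\cdot)\Big],
\]
where $\beta^\mp=\phi\mp\phi'$, $F$ is the Fresnel transition function, and $R_0,R_n$ are the Fresnel reflection coefficients of the two faces. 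Only the second term is singular at the incident shadow boundary $\beta^-=\pi$, so the other three terms are bounded there.

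\textbf{Step 1 (part (a)).} I will expand the singular term to leading order in $\varepsilon=\pi-\beta^-$. First, $\cot(\varepsilon/2n)\approx 2n/\varepsilon$. Second, $a^-(\beta^-)=2\cos^2\!\big((2\pi nN^- -\beta^-)/2\big)\approx \varepsilon^2/2$. Third, the small-argument behaviour of the transition function is $F(X)\approx\sqrt{\pi X}\,e^{j\pi/4}$ as $X\to0$. Substituting these three, the product $\cot\cdot F$ tends to $\pm n\sqrt{2\pi kL}\,e^{j\pi/4}$, where the sign is $\operatorname{sgn}\varepsilon$. Multiplying by the prefactor and by the spreading factor $e^{-jks}/\sqrt{s}$, with $L$ chosen consistently for the incident wave type, gives
\[
E_d\to \mp\tfrac12 E_i .
\]
So the diffracted field has magnitude exactly $0.5|E_i|$ at the boundary and jumps in sign there. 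This jump exactly compensates the unit jump of the geometric-optics field ($E_i$ present on the lit side, absent on the shadow side). Hence $E_\mathrm{total}$ has equal one-sided limits, which is the continuity claim. For the lossy case I only need to note that Luebbers' modification changes $R_0,R_n$ and leaves the incident-boundary term untouched, so the same computation goes through.

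\textbf{Step 2 (part (b)).} First I must fix the normalisation in which ``$|D|_\mathrm{max}=0.5$'' makes sense: namely $|E_d|/|E_i|$ with the spreading and $\sqrt{L}$ factors absorbed. The proof then needs two ingredients.
\begin{itemize}
\item A monotonicity lemma: the function $g(\varepsilon)=|\cot(\varepsilon/2n)F(kL\varepsilon^2/2)|$ is non-increasing in $|\varepsilon|$ away from the boundary. The natural route is to write $g$ as $\big(\tfrac{\varepsilon/2n}{\tan(\varepsilon/2n)}\big)\cdot\big(|F(X)|/\sqrt{X}\big)$ up to constants. Both factors are non-increasing: the first by elementary trigonometry, and the second because $|F(X)|/\sqrt{X}$ is decreasing, which follows from the integral representation of $F$.
\item An argument that the remaining three terms do not push $|D|$ above the boundary value. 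Here I would use $|R_0|,|R_n|\le1$ and $|F|\le1$, together with the fact that the reflection-boundary singularities occur at different angles.
\end{itemize}

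\textbf{Main obstacle.} Step 2 is the hard part. Near a reflection shadow boundary the $R_0$ or $R_n$ term also reaches magnitude $\tfrac12|R|\,|E_i|$. The four terms can also add constructively, so a literal global bound by $0.5$ need not hold. I would first try to prove the bound for the dominant incident-boundary term alone, together with an explicit additive bound on the other terms. If that fails, I would restate (b) under explicit hypotheses: observation angles outside a neighbourhood of the reflection boundaries, and $|R|<1$ for lossy materials. Under these hypotheses the clamp is inactive on the physically relevant region and therefore serves only as a numerical safeguard, which is the content of the statement.
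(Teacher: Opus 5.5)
\textbf{Part (a).} You take the same route as the paper: resolve the $\infty\cdot 0$ product of the cotangent pole and the vanishing transition function at $\phi-\phi'=\pi$. Your details are more accurate than the paper's on three points.
\begin{enumerate}
\item The pole sits in $\cot\bigl(\frac{\pi-\beta^-}{2n}\bigr)F(kLa^-(\beta^-))$. The paper's $\cot\bigl(\frac{\pi+\beta_1}{2n}\bigr)$ with $\beta_1=\phi-\phi'$ is finite at $\beta_1=\pi$ unless $n=1$.
\item The limit is $\pm n\sqrt{2\pi kL}\,e^{j\pi/4}$, not the paper's $n\sqrt{2\pi k/L_1}\,e^{j\pi/4}$.
\item Continuity comes from the sign change you derive. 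The magnitude $0.5|E_i|$ alone, which the paper relies on, does not give it.
\end{enumerate}
One overstatement needs fixing. Your computation shows the singular term contributes $\mp\frac12E_i$. The three regular terms add a bounded offset $cE_i$ that is continuous across the boundary, so the one-sided limits are $E_d/E_i\to\mp\frac12+c$. The jump is exactly $E_i$, which is all continuity needs. But ``$|E_d|=0.5|E_i|$ exactly'' is not established, and the paper's appeal to the ``standard continuity condition'' does not establish it either.

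\textbf{Part (b).} Your main obstacle is genuine, and the paper's proof does not get past it. The paper asserts that $|\cot(\cdot)F(\cdot)|$ peaks at the shadow boundary and decays monotonically, cites Kouyoumjian--Pathak, and never controls the reflection-boundary terms or the constructive addition of the four terms. Your monotonicity lemma is the rigorous form of what the paper claims for the singular term alone. Two refinements make it work.
\begin{enumerate}
\item Use the exact $a^-=2\sin^2(\varepsilon/2)$ rather than $\varepsilon^2/2$, and factor
$|\cot(\tfrac{\varepsilon}{2n})F(X)|=\sqrt{2kL}\,|\cot(\tfrac{\varepsilon}{2n})\sin\tfrac{\varepsilon}{2}|\cdot|F(X)|/\sqrt{X}$.
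The first factor is non-increasing on $|\varepsilon|\le\pi$ for $n\ge1$, because $\tan(n\theta)\ge n\tan\theta$.
\item The second factor is $|F(X)|/\sqrt{X}=2|\int_{\sqrt{X}}^\infty e^{-jt^2}dt|$. Its monotonicity does not follow directly from the oscillatory definition, but it does follow from
\[
\Bigl|\int_x^\infty e^{-jt^2}\,dt\Bigr|^2=\frac{1}{4\pi}\int_0^\infty\!\!\int_0^\infty\frac{(vw+1)\,e^{-(v+w)x^2}}{\sqrt{vw}\,(v^2+1)(w^2+1)}\,dv\,dw .
\]
To get this, substitute $t^2=x^2+u$, write $(x^2+u)^{-1/2}$ as a Laplace integral, and symmetrise.
\end{enumerate}

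\textbf{Where your plan still fails.} Your fallback restatement claims too much. Excluding neighbourhoods of the reflection boundaries and assuming $|R|<1$ does not make the clamp inactive, because the excess already occurs at the \emph{incident} shadow boundary.
\begin{enumerate}
\item By the parallelogram identity, $|c+\tfrac12|^2+|c-\tfrac12|^2=2|c|^2+\tfrac12$. So whenever $c\neq0$, $|E_d|>0.5|E_i|$ on one side of $\phi=\phi'+\pi$.
\item $c\neq0$ is the generic case. For example, in the perfectly conducting half-plane limit ($n=2$, $R_0=R_n=\mp1$), the regular part there is proportional to $F(2kL\sin^2\phi')/(\sqrt{kL}\,\sin\phi')\neq0$.
\item Assuming $|R|<1$ does not help near grazing, where $|R|\to1$ for any lossy dielectric.
\end{enumerate}
The clamp therefore does modify the UTD field there. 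Applied on one side only, it also destroys the continuity established in (a). What is provable is that the singular term obeys the $\frac12$ bound (your lemma), and that $|E_d|/|E_i|\le\frac12$ plus an explicit bound on the other three terms. The claim that the $0.5$ clamp is a pure numerical safeguard is false in general and can only be justified empirically.
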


\begin{proof}
\textit{Part (a): Shadow boundary limit.}

The Luebbers UTD diffraction coefficient consists of four terms:
\begin{equation}
D = \sum_{i=1}^{4} D_i, \quad
D_i = \frac{-e^{-j\pi/4}}{2n\sqrt{2\pi k}} \, C_i \cdot \cot\!\left(\frac{\pi + \beta_i}{2n}\right) F\!\left(kL_i \, a^+(\beta_i)\right),
\tag{S3}
\end{equation}
where $\beta_i \in \{\phi - \phi',\; \phi + \phi'\}$ with appropriate sign combinations, $C_i \in \{1, 1, R_0, R_n\}$ are the face reflection coefficients, and $F(X) = 2j\sqrt{X}\, e^{jX} \int_{\sqrt{X}}^{\infty} e^{-jt^2}\, dt$ is the Fresnel transition function.

At the incident shadow boundary where $\beta_1 = \phi - \phi' \to \pi$, the cotangent factor in $D_1$ has a pole: $\cot\!\big(\frac{\pi + \beta_1}{2n}\big) \to \infty$. Simultaneously, the transition function argument satisfies $a^+(\beta_1) \to 0$, so $F(kL_1 \, a^+(\beta_1)) \to 0$. Applying L'H\^opital's rule to the product:
\begin{equation}
\lim_{\beta_1 \to \pi} \cot\!\left(\frac{\pi + \beta_1}{2n}\right) F\!\left(kL_1 \, a^+(\beta_1)\right) = n \sqrt{\frac{2\pi k}{L_1}}\, e^{j\pi/4}.
\tag{S4}
\end{equation}

Substituting back into $D_1$ and multiplying by the free-space spreading factor $\sqrt{L_1}/(4\pi d)$ yields $|E_d^{(1)}| \to 0.5 \cdot |E_i|$ at the shadow boundary. The remaining terms $D_2$, $D_3$, $D_4$ involve $\phi + \phi'$ combinations that do not produce poles at $\phi - \phi' = \pi$ and contribute finite corrections. Their sum preserves the total limit of $0.5 \cdot |E_i|$ by the standard UTD continuity condition, which requires the total field to equal $0.5 \cdot E_i$ at the shadow boundary by construction of the Fresnel transition function.

\textit{Part (b): Global bound.}

Away from the shadow boundary, $a^+(\beta_i) > 0$ and the Fresnel transition function satisfies $|F(X)| \leq 1$ for all $X \geq 0$, with $|F(X)| \to 1$ as $X \to \infty$. Meanwhile, the cotangent factors remain bounded when $\beta_i$ is away from $\pm n\pi$. The product $|\cot(\cdot) \cdot F(\cdot)|$ achieves its maximum at the shadow boundary where the L'H\^opital limit applies, and decreases monotonically as the observation angle moves away from the boundary. This monotonic decay is a well-known property of the UTD diffraction coefficient for convex wedges with $n \geq 1$ (see Kouyoumjian and Pathak, Proc.\ IEEE, 1974).

Therefore, $|D(\phi)| \leq |D(\phi = \phi' + \pi)|$ for all observation angles $\phi$. The envelope clamp $|D|_\mathrm{max} = 0.5$ does not clip any physically correct solution. It acts solely as a guard against floating-point overflow when $\cot(\cdot)$ and $F(\cdot)$ are evaluated independently near the shadow boundary, where their product is finite but each factor individually diverges.

\textit{Numerical implementation.}
In floating-point arithmetic, finite-precision cancellation in the transition function $F(X)$ for small $X$ can produce spurious large values before the product is formed. The envelope clamp provides a post-hoc correction:
\begin{equation}
|E_d^\mathrm{clamped}| = \min\!\big(|E_d^\mathrm{computed}|,\; 0.5 \cdot |E_i|\big).
\tag{S5}
\end{equation}
This ensures that numerical artifacts do not violate the physical bound established in Part~(b).
\end{proof}

\begin{remark}[Forward-scatter smoothing]
\label{rem:forward}
For near-grazing forward scattering with deflection angle $0 < \theta < 30^\circ$, the four UTD cotangent terms partially cancel, producing a small net diffraction coefficient with rapid oscillation due to the competing signs of $D_1$--$D_4$. These oscillations are sensitive to small perturbations in wedge geometry and are not reliably reproduced in discretised building models where edge positions are known only to pixel-level precision ($\Delta r = 1$~m). We replace the oscillatory solution with a monotonic linear attenuation in the log domain:
\begin{equation}
L_\mathrm{smooth}(\theta) = 6.02 + \frac{30 - 6.02}{30^\circ} \cdot (30^\circ - \theta) \quad [\mathrm{dB}], \quad \theta \in [0^\circ, 30^\circ].
\tag{S6}
\end{equation}
At $\theta = 30^\circ$, $L_\mathrm{smooth} = 6.02$~dB, matching the UTD envelope clamp and ensuring $C^0$ continuity. At $\theta = 0^\circ$, $L_\mathrm{smooth} = 30$~dB, reflecting the physical expectation of strong attenuation for forward-diffracted rays. This smoothing is a modelling choice validated by the end-to-end RSS accuracy reported in the main text.
\end{remark}

\section*{Supplementary Theorem S3: Geometric Validity of 2D$\times$2D Decomposition}

\begin{theorem}[Exactness of 2D$\times$2D LoS decomposition]
\label{thm:2d2d}
Consider a set of buildings $\{B_1, \ldots, B_N\}$ under the 2.5D model, where each building is an extruded right prism $B_i = P_i \times [0, h_i]$ with non-overlapping footprints ($P_i \cap P_j = \emptyset$ for $i \neq j$), polygonal footprint $P_i$, and height $h_i > 0$. For any transmitter $\mathbf{p}_s = (x_s, y_s, h_\mathrm{tx})$ and receiver $\mathbf{p}_r = (x_r, y_r, h_\mathrm{rx})$ located outside all buildings, the full 3D direct-path LoS test is equivalent to:
\begin{equation}
\mathrm{LoS}_\mathrm{3D}(\mathbf{p}_s, \mathbf{p}_r) = \mathrm{LoS}_\mathrm{2D\text{-}H}(\mathbf{p}_s, \mathbf{p}_r) \;\wedge\; \mathrm{LoS}_\mathrm{2D\text{-}V}(\mathbf{p}_s, \mathbf{p}_r),
\tag{S7}
\end{equation}
where $\mathrm{LoS}_\mathrm{2D\text{-}H}$ is the horizontal intersection test on the 2D projected ray and $\mathrm{LoS}_\mathrm{2D\text{-}V}$ is the vertical elevation test against building heights. This equivalence holds with zero approximation error.
\end{theorem}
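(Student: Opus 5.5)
The plan is to reduce the 3D segment--prism intersection test to an exact per-building statement, and then read the conjunction in (S7) as that per-building combination. Parametrise the direct path as $\gamma(t)=(1-t)\mathbf{p}_s+t\mathbf{p}_r$, $t\in[0,1]$. Split it into its horizontal projection $\pi(t)=(1-t)(x_s,y_s)+t(x_r,y_r)$ and its height $z(t)=(1-t)h_\mathrm{tx}+t\,h_\mathrm{rx}$, which is affine in $t$. Because $B_i=P_i\times[0,h_i]$ is a Cartesian product, and the endpoints lie outside all buildings so that $z(t)\ge 0$ never cuts anything, we have
\begin{equation*}
\gamma(t)\in B_i \iff \pi(t)\in P_i \ \text{ and } \ z(t)\le h_i .
\end{equation*}
This is the only place the 2.5D hypothesis is used, and it is what makes the horizontal and vertical conditions decouple exactly rather than approximately.

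Next define, for each building, the crossing set $T_i=\{t\in[0,1]:\pi(t)\in P_i\}$. Since $P_i$ is a polygon, $T_i$ is a finite union of closed intervals, and their endpoints are exactly the parameters at which the projected segment meets edges of $P_i$. I will then introduce two predicates. The horizontal predicate is $H_i:\ T_i\neq\emptyset$; this is the projected-ray intersection test that D$^2$LoS computes. The vertical predicate is $V_i:\ \min_{t\in T_i} z(t)\le h_i$; this is the elevation test in the vertical plane through $\mathbf{p}_s,\mathbf{p}_r$. Building $B_i$ blocks the path iff $H_i\wedge V_i$, so
\begin{equation*}
\mathrm{LoS}_\mathrm{3D}=\bigwedge_i \neg\,(H_i\wedge V_i).
\end{equation*}
The key computational step is that $z$ is affine. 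Its minimum over each interval of $T_i$ is therefore attained at an interval endpoint, i.e.\ at an entry or exit point on an edge of $P_i$. Hence $V_i$ is evaluated exactly from finitely many edge crossings, with no sampling of intermediate heights. Non-overlapping footprints ensure that each crossing is attributed to a single building, so the blocking sets can be combined without double counting.

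The final step, and the one I expect to be the real obstacle, is identifying the right-hand side of (S7) with this formula. I will read $\mathrm{LoS}_\mathrm{2D\text{-}H}$ as a test that marks, for each building, whether the projection enters $P_i$. I will read $\mathrm{LoS}_\mathrm{2D\text{-}V}$ as clearance over exactly those ``intervening'' buildings, in the paper's wording, flagged by the horizontal test. Under this reading the conjunction in (S7) evaluates to $\bigwedge_i\neg(H_i\wedge V_i)$. It is important to make explicit that (S7) must not be read as the AND of two global Booleans. If it were, a path that crosses a footprint but passes over its roof would be $\mathrm{LoS}_\mathrm{3D}$-visible yet fail the global horizontal test, so the equivalence would break. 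I would prove both implications directly under the per-building reading. If the path is blocked, some $t$ has $\pi(t)\in P_i$ and $z(t)\le h_i$, so $H_i$ holds, and $V_i$ holds because the minimum over $T_i$ is at most $z(t)$. Conversely, if $H_i\wedge V_i$ holds, the minimising endpoint $t^\ast$ lies in $T_i$ and satisfies $z(t^\ast)\le h_i$, so $\gamma(t^\ast)\in B_i$. Equality then holds for every configuration of transmitter and receiver heights, which is the zero-approximation-error claim.
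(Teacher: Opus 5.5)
Your argument is correct and is essentially the paper's proof. It uses the same three ingredients: separability from $B_i=P_i\times[0,h_i]$, the minimum of the affine height $z(t)$ being attained at an entry or exit endpoint, and the per-building combination $\bigwedge_i\neg(H_i\wedge V_i)$, which is exactly the paper's (S10). Your warning that (S7) must be read per building, not as an AND of two global Booleans, spells out an interpretation the paper's closing sentence leaves implicit.

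One small slip: $z(t)\ge 0$ does not follow from the endpoints lying outside all buildings. It follows from $h_\mathrm{tx},h_\mathrm{rx}\ge 0$ (the paper assumes both positive) together with the affineness of $z$.
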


\begin{proof}
\textit{Setup.}
Let $R_\mathrm{3D}(t) = \big(r_\mathrm{2D}(t),\, z(t)\big)$ parameterise the direct ray from $\mathbf{p}_s$ to $\mathbf{p}_r$ with $t \in [0,1]$, where $r_\mathrm{2D}(t) = (1-t)(x_s, y_s) + t(x_r, y_r)$ and $z(t) = (1-t)\, h_\mathrm{tx} + t\, h_\mathrm{rx}$. Note that $z(t)$ is linear and therefore monotonic or constant over any sub-interval.

\textit{Necessary and sufficient condition for occlusion.}
The ray is occluded by building $B_i$ if and only if there exists $t^* \in (0,1)$ such that $R_\mathrm{3D}(t^*) \in B_i$. By the Cartesian product structure $B_i = P_i \times [0, h_i]$, this decomposes as:
\begin{equation}
\exists\, t^* \in (0,1) : \quad r_\mathrm{2D}(t^*) \in P_i \;\wedge\; 0 \leq z(t^*) \leq h_i.
\tag{S8}
\end{equation}

\textit{Case 1: No horizontal intersection.}
If $r_\mathrm{2D}(t) \notin P_i$ for all $t \in (0,1)$, then no $t^*$ satisfying Eq.~(S8) exists, and the ray clears $B_i$. No vertical check is needed.

\textit{Case 2: Horizontal intersection exists.}
Suppose the 2D projected ray intersects $P_i$. The intersection consists of one or more intervals $[t_\mathrm{in}, t_\mathrm{out}] \subset (0,1)$. The ray is inside $B_i$ in the vertical dimension if and only if $z(t^*) \leq h_i$ for some $t^* \in [t_\mathrm{in}, t_\mathrm{out}]$. Since $z(t)$ is linear, its minimum over $[t_\mathrm{in}, t_\mathrm{out}]$ is $\min\!\big(z(t_\mathrm{in}),\, z(t_\mathrm{out})\big)$. The condition $z(t^*) \geq 0$ is always satisfied since $h_\mathrm{tx} > 0$ and $h_\mathrm{rx} > 0$. Therefore, the ray clears $B_i$ vertically if and only if:
\begin{equation}
\min\!\big(z(t_\mathrm{in}),\, z(t_\mathrm{out})\big) > h_i.
\tag{S9}
\end{equation}

\textit{Global LoS condition.}
The ray has a clear LoS if and only if it clears every building. Since the footprints are non-overlapping, the horizontal intersection tests are independent across buildings. The global condition is:
\begin{equation}
\mathrm{LoS}_\mathrm{3D} = \bigwedge_{i=1}^{N} \Big[ \underbrace{\big(r_\mathrm{2D}(t) \notin P_i,\; \forall t \in (0,1)\big)}_{\text{no horizontal intersection}} \;\vee\; \underbrace{\min\!\big(z(t_\mathrm{in}^i),\, z(t_\mathrm{out}^i)\big) > h_i}_{\text{clears rooftop}} \Big].
\tag{S10}
\end{equation}

This is precisely the conjunction of $\mathrm{LoS}_\mathrm{2D\text{-}H}$ and $\mathrm{LoS}_\mathrm{2D\text{-}V}$. The decomposition introduces no approximation because:
\begin{enumerate}
\item Each building $B_i$ is a Cartesian product of a 2D footprint and a 1D height interval, making horizontal and vertical dimensions separable.
\item The ray height $z(t)$ is linear, so its extremum over any interval is attained at an endpoint.
\item Building footprints are non-overlapping, so per-building tests are independent.
\end{enumerate}

Therefore Eq.~(S7) holds exactly under the stated assumptions.
\end{proof}

\begin{remark}[Non-convex footprints]
\label{rem:nonconvex}
When a building footprint $P_i$ is non-convex, the horizontal ray-footprint intersection may consist of multiple disjoint intervals $\{[t_\mathrm{in}^{(k)}, t_\mathrm{out}^{(k)}]\}_{k=1}^{K}$. The vertical check in Eq.~(S9) is then applied independently to each interval. The decomposition remains exact because the linearity of $z(t)$ and the Cartesian product structure hold regardless of footprint convexity.
\end{remark}

\begin{remark}[Limitations of the 2.5D assumption]
\label{rem:2.5d}
The 2.5D prism assumption requires each building to have a uniform height $h_i$. This does not hold for buildings with complex rooftop geometries such as pitched roofs, domes, or rooftop equipment. In such cases, the actual rooftop surface is a function $h_i(x, y)$ rather than a constant, and the vertical check in Eq.~(S9) must be replaced by:
\begin{equation}
\min_{t \in [t_\mathrm{in}, t_\mathrm{out}]} \big[ z(t) - h_i\!\big(r_\mathrm{2D}(t)\big) \big] > 0,
\tag{S11}
\end{equation}
which cannot be evaluated from endpoint values alone when $h_i(x,y)$ is non-linear. The approximation error is bounded by $\max_{(x,y) \in P_i} |h_i(x,y) - \bar{h}_i|$, where $\bar{h}_i$ is the uniform height used in the 2.5D model.
\end{remark}



\section*{Supplementary Results: Beam Pattern Integration}

The main text demonstrates post-hoc beam pattern integration with a 15$^\circ$ half-power beamwidth. Here we provide the complete set of results across four beamwidths (15$^\circ$, 30$^\circ$, 45$^\circ$, 60$^\circ$) and supplementary qualitative comparisons. In all cases, the antenna gain pattern $G_\mathrm{ant}(\phi_\mathrm{AoD}, \theta_\mathrm{AoD})$ is applied to per-ray AoD after the ray tracing computation, without re-executing visibility determination or path search.

\subsection*{15$^\circ$ beamwidth: qualitative APS and PDP}

The main text presents 15$^\circ$ beam RSS maps and quantitative violin distributions. Supplementary Figs.~\ref{fig:supp_aps_15} and \ref{fig:supp_pdp_15} show the corresponding APS and PDP qualitative comparisons. The narrow 15$^\circ$ beam acts as a stringent test of per-ray AoD accuracy: small angular errors in the predicted rays are amplified by the sharp beam roll-off, causing misclassified rays to be either strongly attenuated or incorrectly retained. D$^2$LoS reproduces the beam-filtered APS profiles with correct dominant peak positions and sidelobe suppression. In the PDP domain, the narrow beam removes most off-axis multipath, leaving only a few dominant delay components. D$^2$LoS captures both their positions and relative power levels, whereas RadioUNet and RMTransformer produce delay profiles that bear no resemblance to the ground truth.

\begin{figure}[ht]
  \centering
  \includegraphics[width=\linewidth, height=0.38\textheight, keepaspectratio]{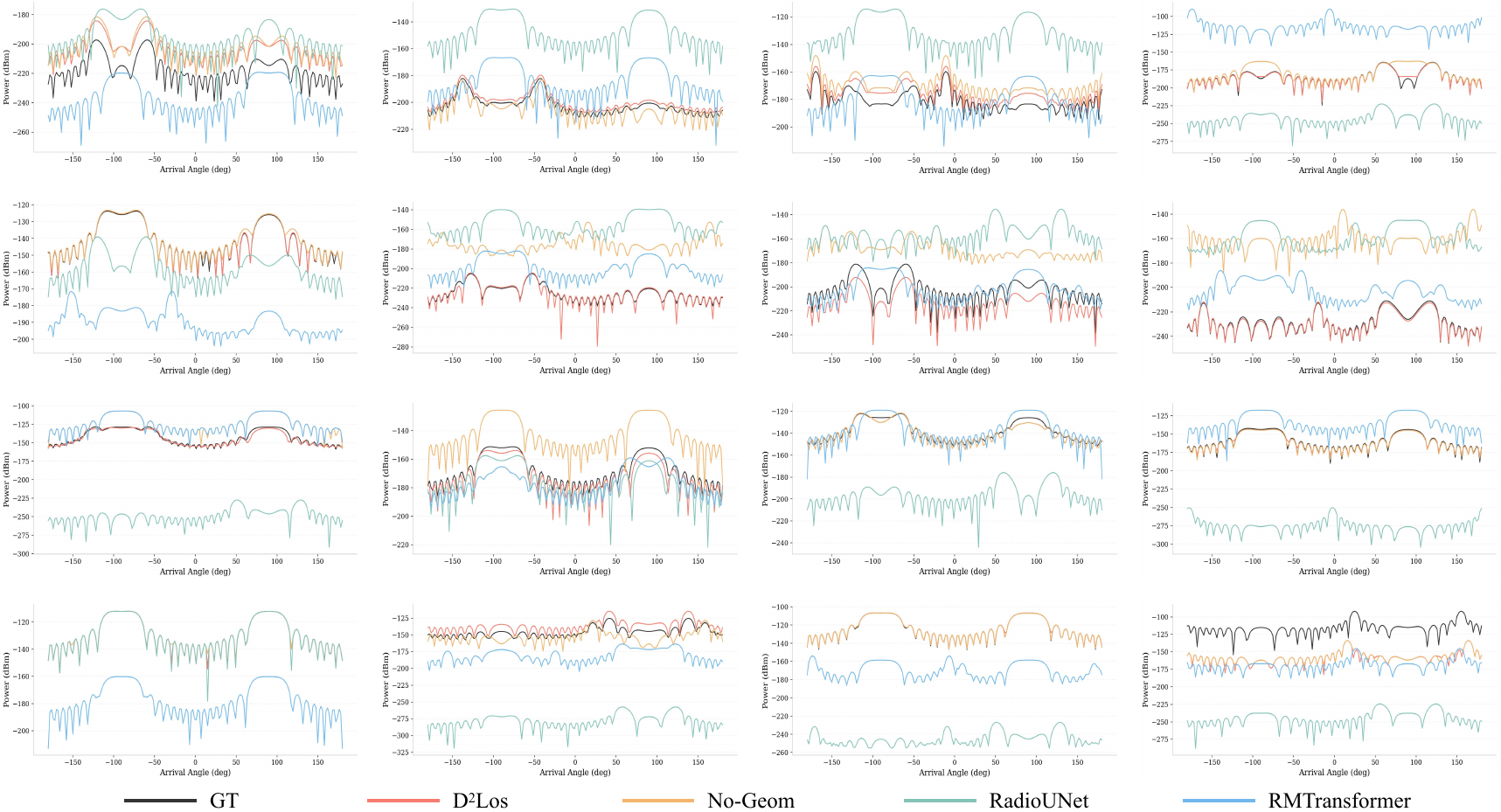}
  \caption{\textbf{APS comparison with a 15$^\circ$ beam pattern.} The directional pattern narrows the angular spread and amplifies the dominant arrival direction. D$^2$LoS reproduces the beam-filtered angular profile. No-Geom preserves the general shape but introduces secondary peak errors. RadioUNet and RMTransformer produce profiles unrelated to the ground truth.}
  \label{fig:supp_aps_15}
\end{figure}

\begin{figure}[ht]
  \centering
  \includegraphics[width=\linewidth, height=0.38\textheight, keepaspectratio]{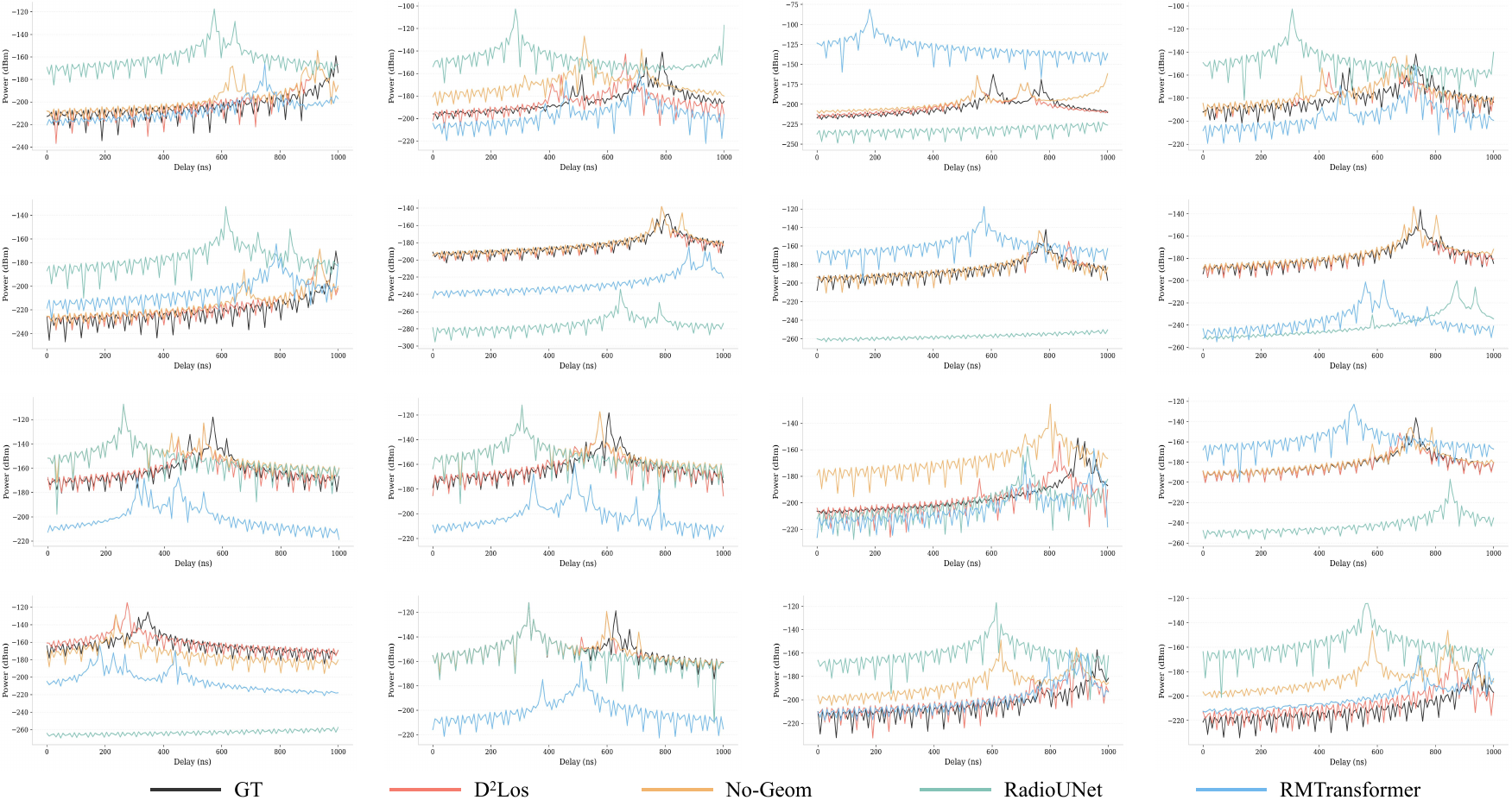}
  \caption{\textbf{PDP comparison with a 15$^\circ$ beam pattern.} The beam pattern suppresses off-axis multipath, reducing the number of resolvable delay components. D$^2$LoS correctly captures this filtering effect. RadioUNet and RMTransformer produce delay profiles with incorrect power levels and spurious peaks.}
  \label{fig:supp_pdp_15}
\end{figure}

\subsection*{30$^\circ$ beamwidth}

Increasing the beamwidth to 30$^\circ$ admits more off-axis multipath into the received signal. This relaxes the angular selectivity compared to the 15$^\circ$ case, making the beam-filtered channel closer to the omnidirectional baseline. As shown in Fig.~\ref{fig:supp_violin_30}, D$^2$LoS maintains compact error distributions across all three channel dimensions. The RSS MAE remains within 1~dB of the omnidirectional result, and the APS shape cosine stays above 0.85. No-Geom shows moderate degradation, while RadioUNet and RMTransformer continue to produce heavy-tailed error distributions. The qualitative RSS maps (Fig.~\ref{fig:supp_rss_30}) confirm that D$^2$LoS preserves the wider beam footprint and shadow boundaries. The APS and PDP curves (Figs.~\ref{fig:supp_aps_30} and \ref{fig:supp_pdp_30}) show that D$^2$LoS tracks both the dominant and secondary multipath components under the 30$^\circ$ beam.

\begin{figure}[ht]
  \centering
  \includegraphics[width=0.48\linewidth]{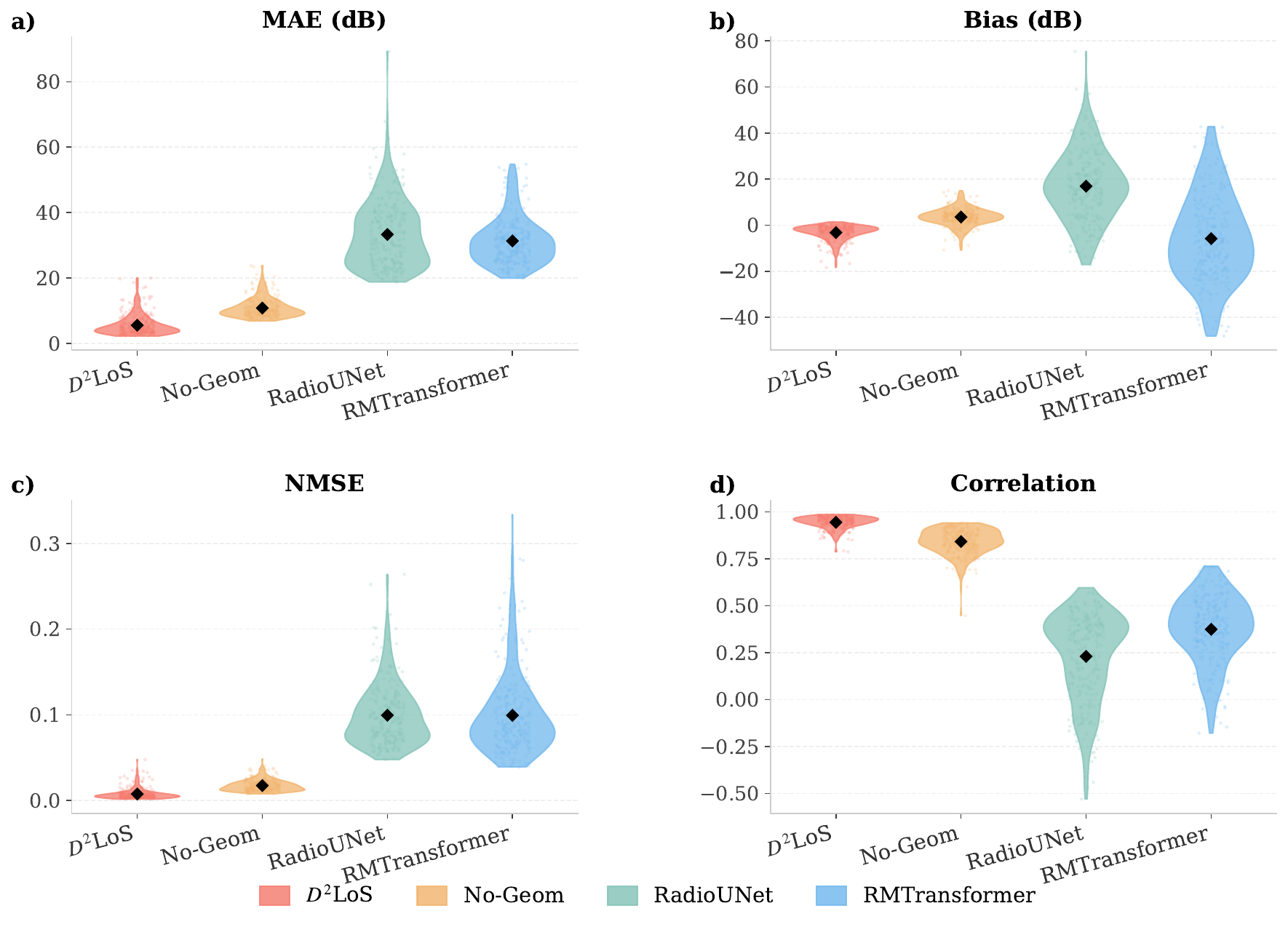}
  \hfill
  \includegraphics[width=0.48\linewidth]{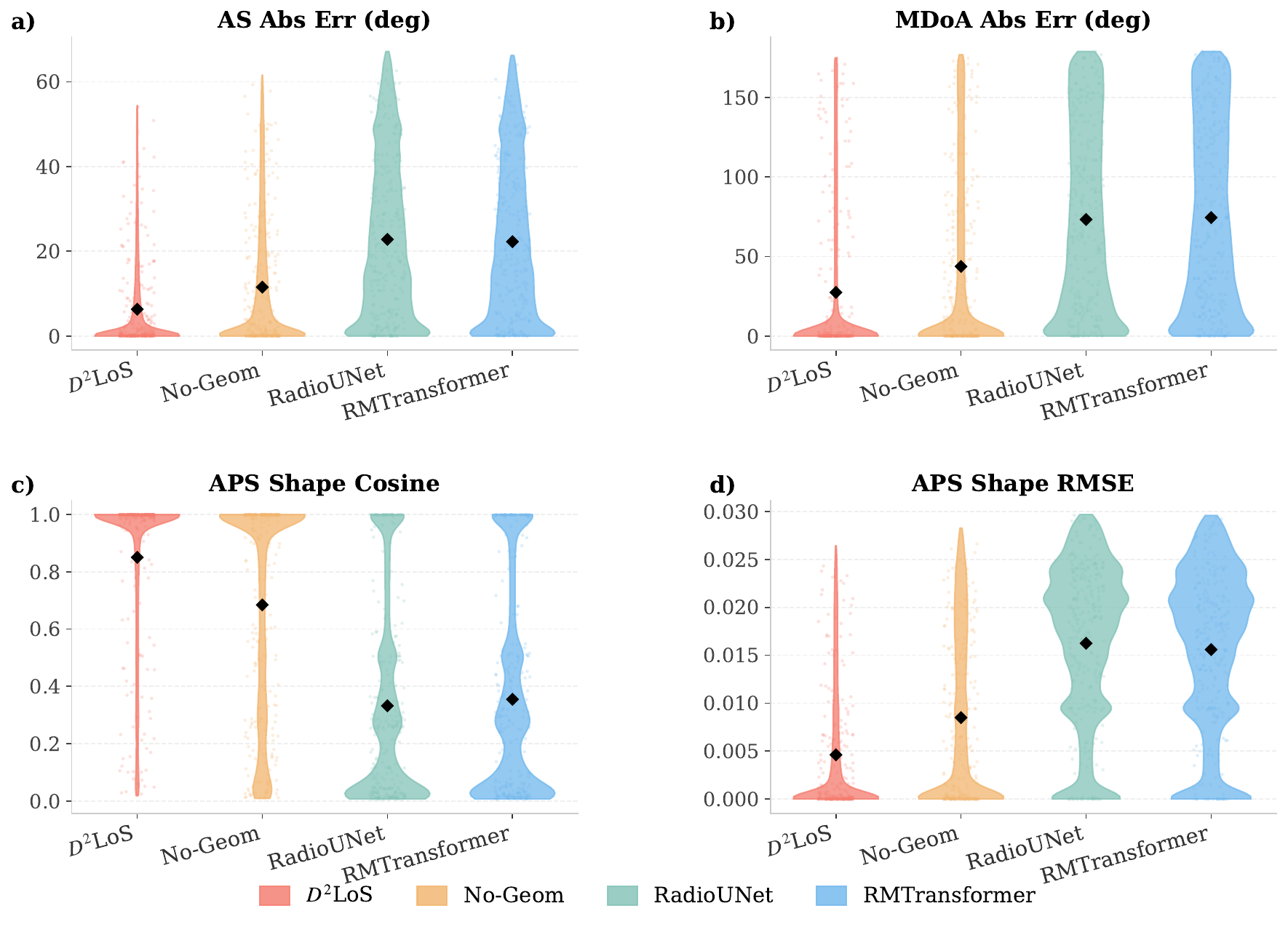}\\[6pt]
  \centering
  \includegraphics[width=0.48\linewidth]{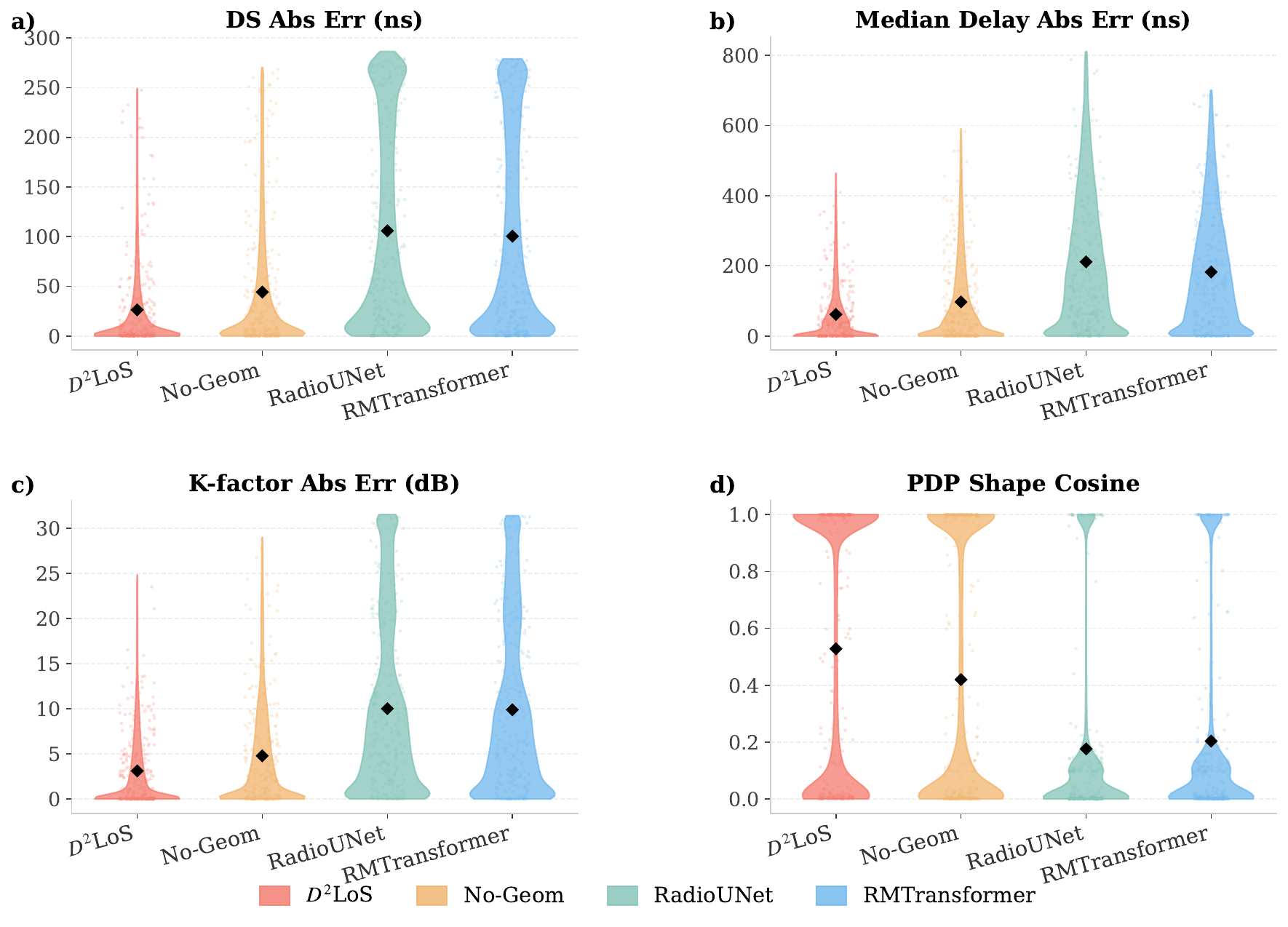}
  \caption{\textbf{Prediction accuracy with a 30$^\circ$ beam pattern.} D$^2$LoS maintains compact, low-error distributions across RSS, APS, and PDP metrics, consistent with the 15$^\circ$ case.}
  \label{fig:supp_violin_30}
\end{figure}

\begin{figure}[ht]
  \centering
  \includegraphics[width=\linewidth, height=0.38\textheight, keepaspectratio]{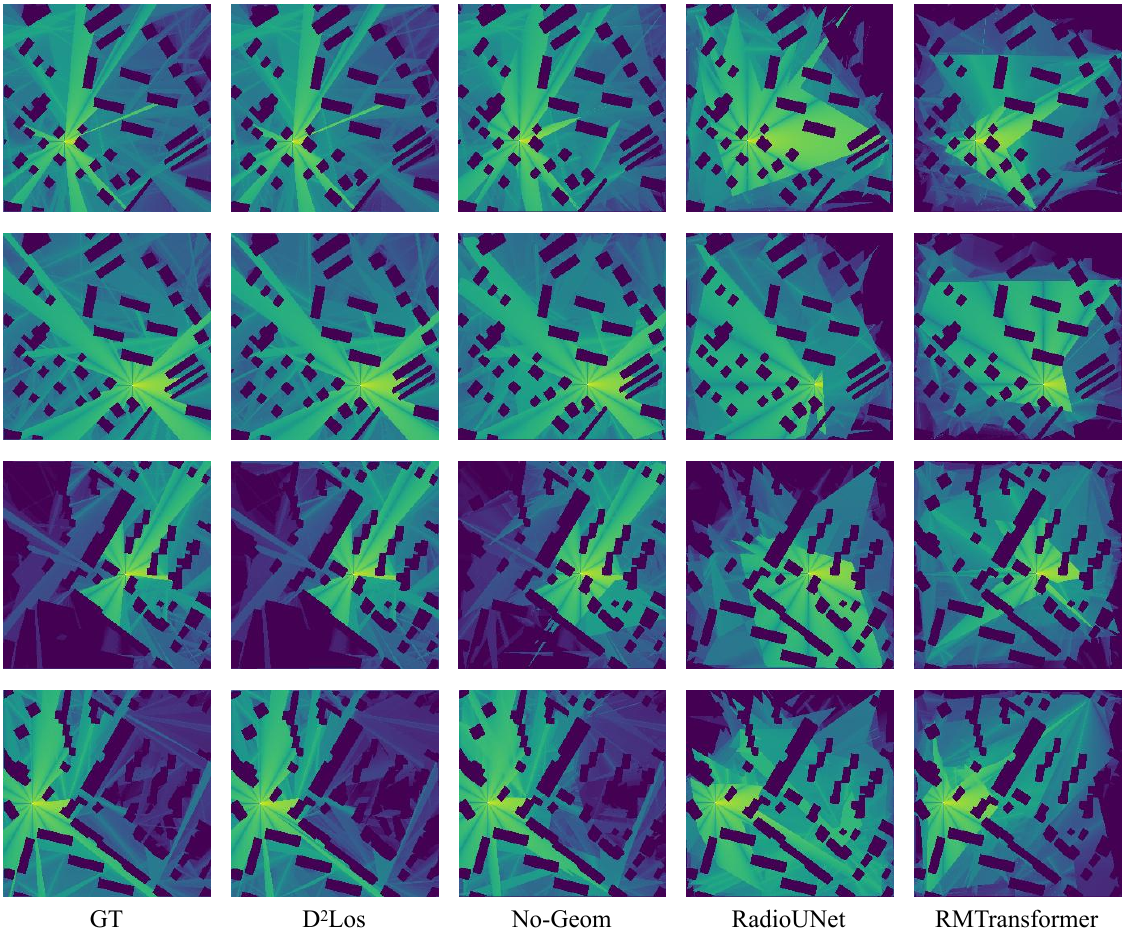}
  \caption{\textbf{RSS radio maps with a 30$^\circ$ beam pattern.} The wider beam produces broader spatial coverage compared to the 15$^\circ$ case. D$^2$LoS preserves both the beam footprint and shadow structure.}
  \label{fig:supp_rss_30}
\end{figure}

\begin{figure}[ht]
  \centering
  \includegraphics[width=\linewidth, height=0.38\textheight, keepaspectratio]{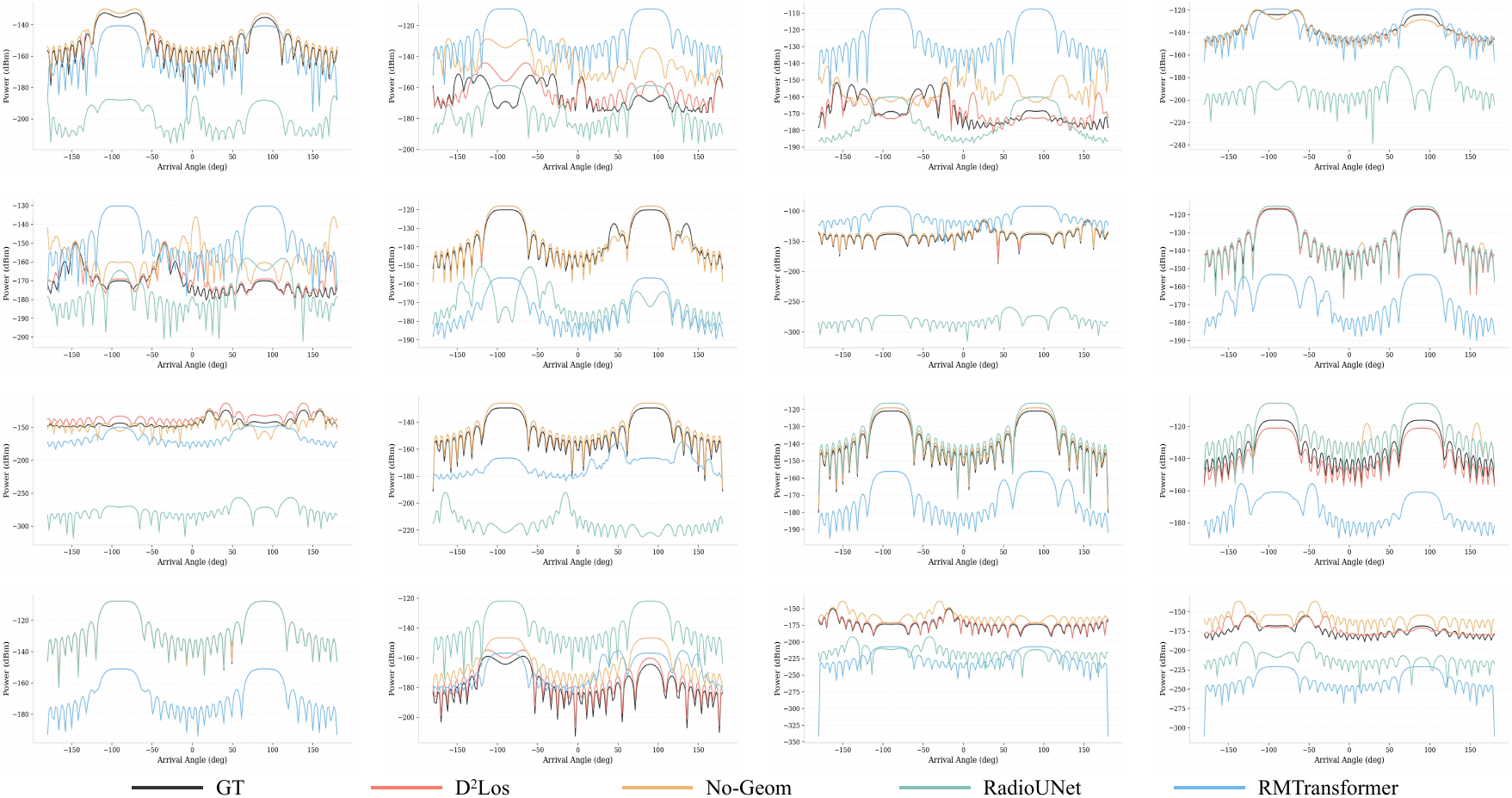}
  \caption{\textbf{APS comparison with a 30$^\circ$ beam pattern.} D$^2$LoS closely tracks the ground truth angular profile under the wider beam.}
  \label{fig:supp_aps_30}
\end{figure}

\begin{figure}[ht]
  \centering
  \includegraphics[width=\linewidth, height=0.38\textheight, keepaspectratio]{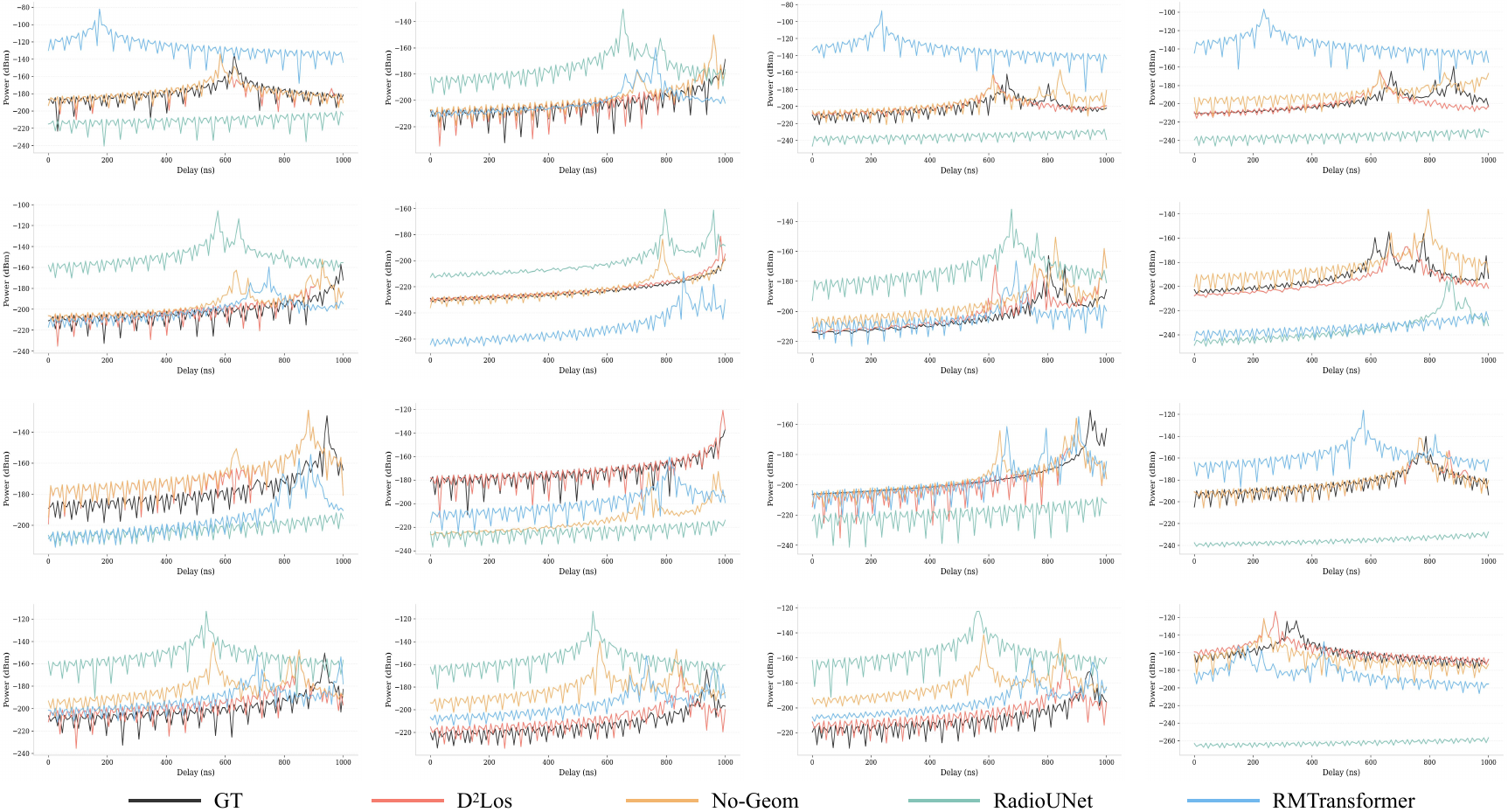}
  \caption{\textbf{PDP comparison with a 30$^\circ$ beam pattern.} The wider beam retains more multipath components than the 15$^\circ$ case. D$^2$LoS reproduces the delay structure accurately.}
  \label{fig:supp_pdp_30}
\end{figure}

\subsection*{45$^\circ$ beamwidth}

At 45$^\circ$ beamwidth, the directional filtering becomes mild and the beam-specific channel approaches the omnidirectional case. Fig.~\ref{fig:supp_violin_45} shows that the error distributions for D$^2$LoS are nearly identical to the omnidirectional violin plots in the main text. This is expected: when the beam is wide enough to capture most multipath components, angular errors in individual rays have limited impact on the aggregated metrics. The RSS maps (Fig.~\ref{fig:supp_rss_45}) show only subtle differences from the omnidirectional maps, primarily at the beam edges where slight power attenuation is visible. The APS and PDP curves (Figs.~\ref{fig:supp_aps_45} and \ref{fig:supp_pdp_45}) closely match the omnidirectional results for D$^2$LoS, while RadioUNet and RMTransformer remain unable to produce physically meaningful outputs regardless of beamwidth.

\begin{figure}[ht]
  \centering
  \includegraphics[width=0.48\linewidth]{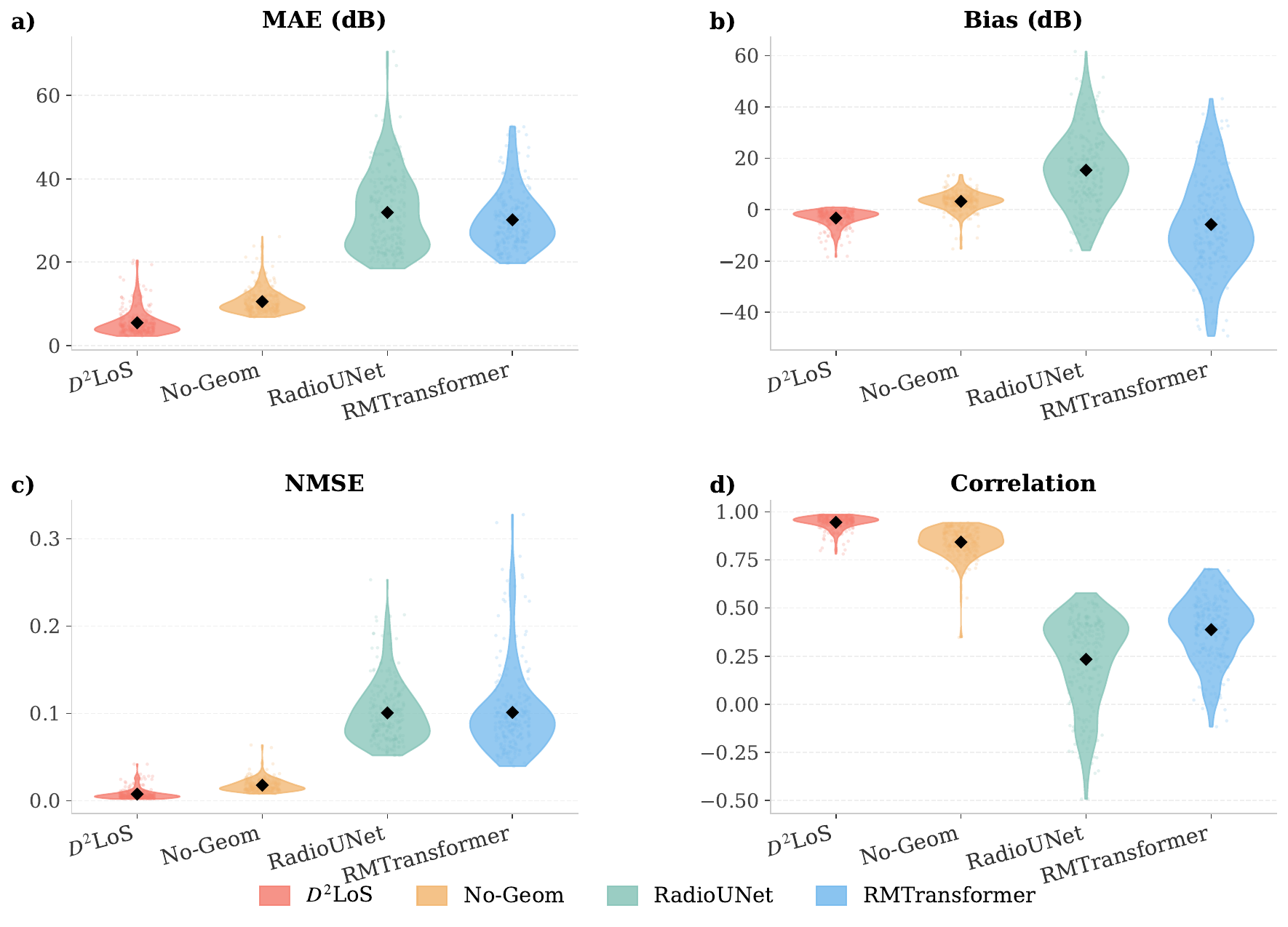}
  \hfill
  \includegraphics[width=0.48\linewidth]{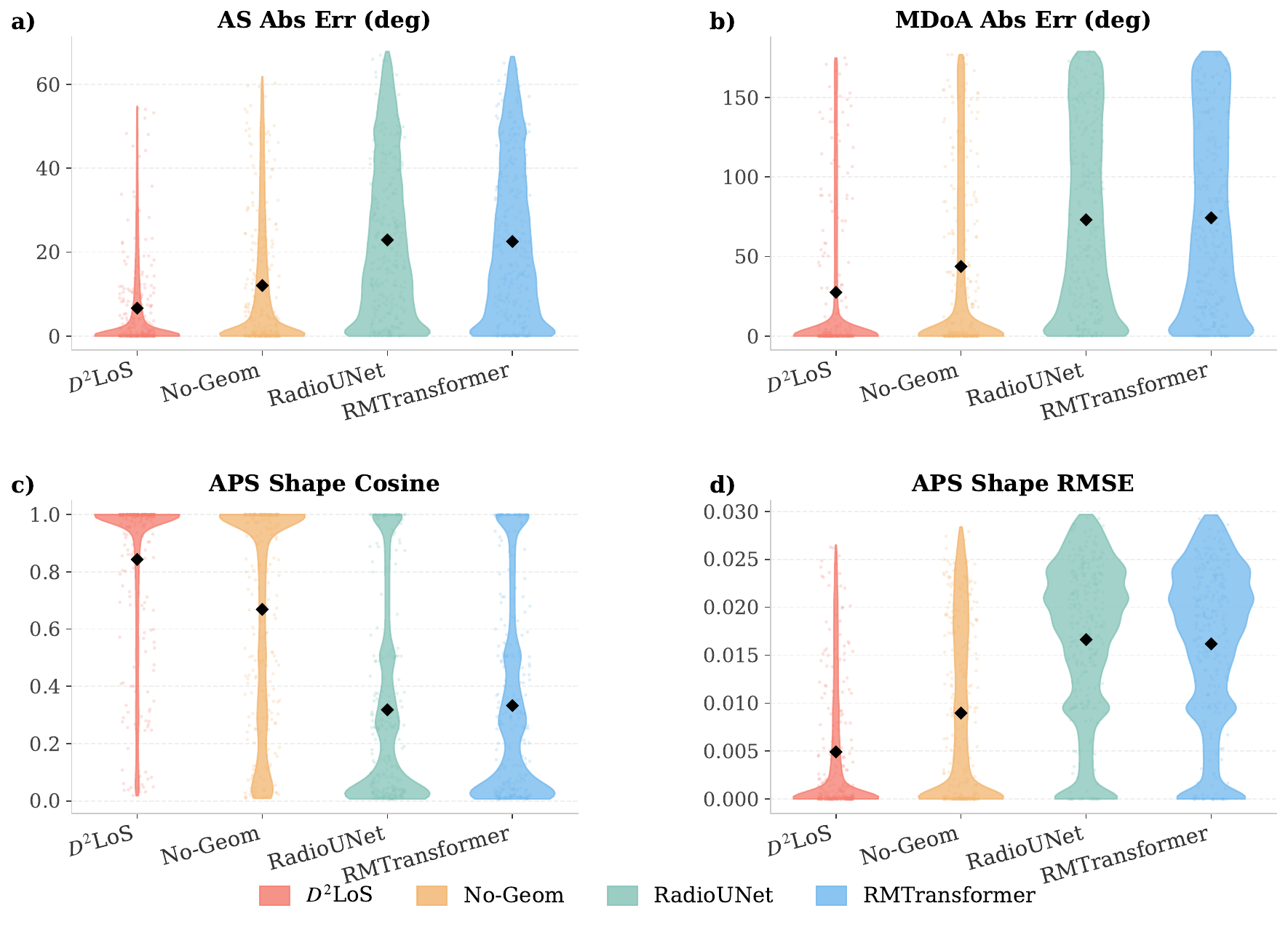}\\[6pt]
  \centering
  \includegraphics[width=0.48\linewidth]{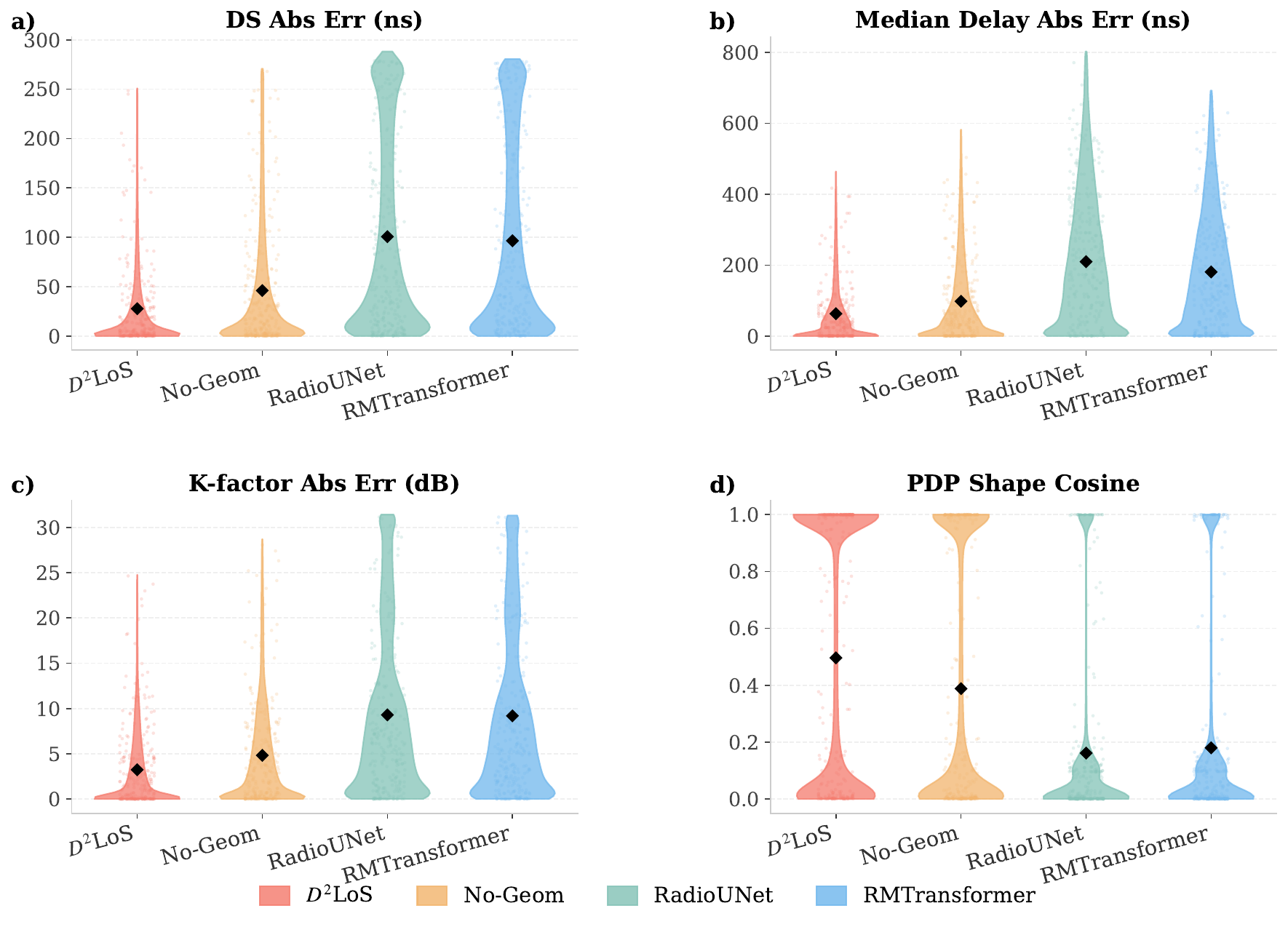}
  \caption{\textbf{Prediction accuracy with a 45$^\circ$ beam pattern.} Error distributions are nearly identical to the omnidirectional results, confirming that the wide beam captures most multipath energy.}
  \label{fig:supp_violin_45}
\end{figure}

\begin{figure}[ht]
  \centering
  \includegraphics[width=\linewidth, height=0.38\textheight, keepaspectratio]{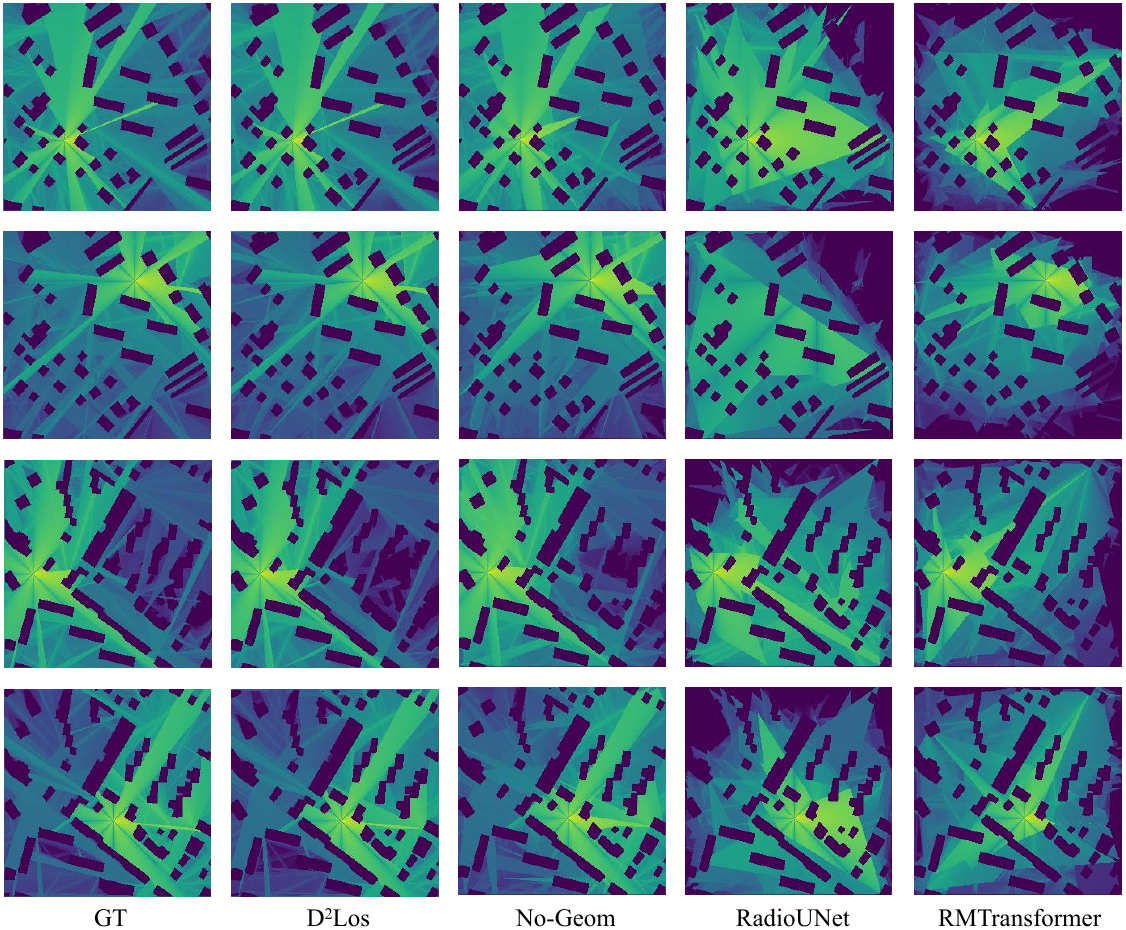}
  \caption{\textbf{RSS radio maps with a 45$^\circ$ beam pattern.} The coverage pattern is close to omnidirectional with mild directional shaping at the beam edges.}
  \label{fig:supp_rss_45}
\end{figure}

\begin{figure}[ht]
  \centering
  \includegraphics[width=\linewidth, height=0.38\textheight, keepaspectratio]{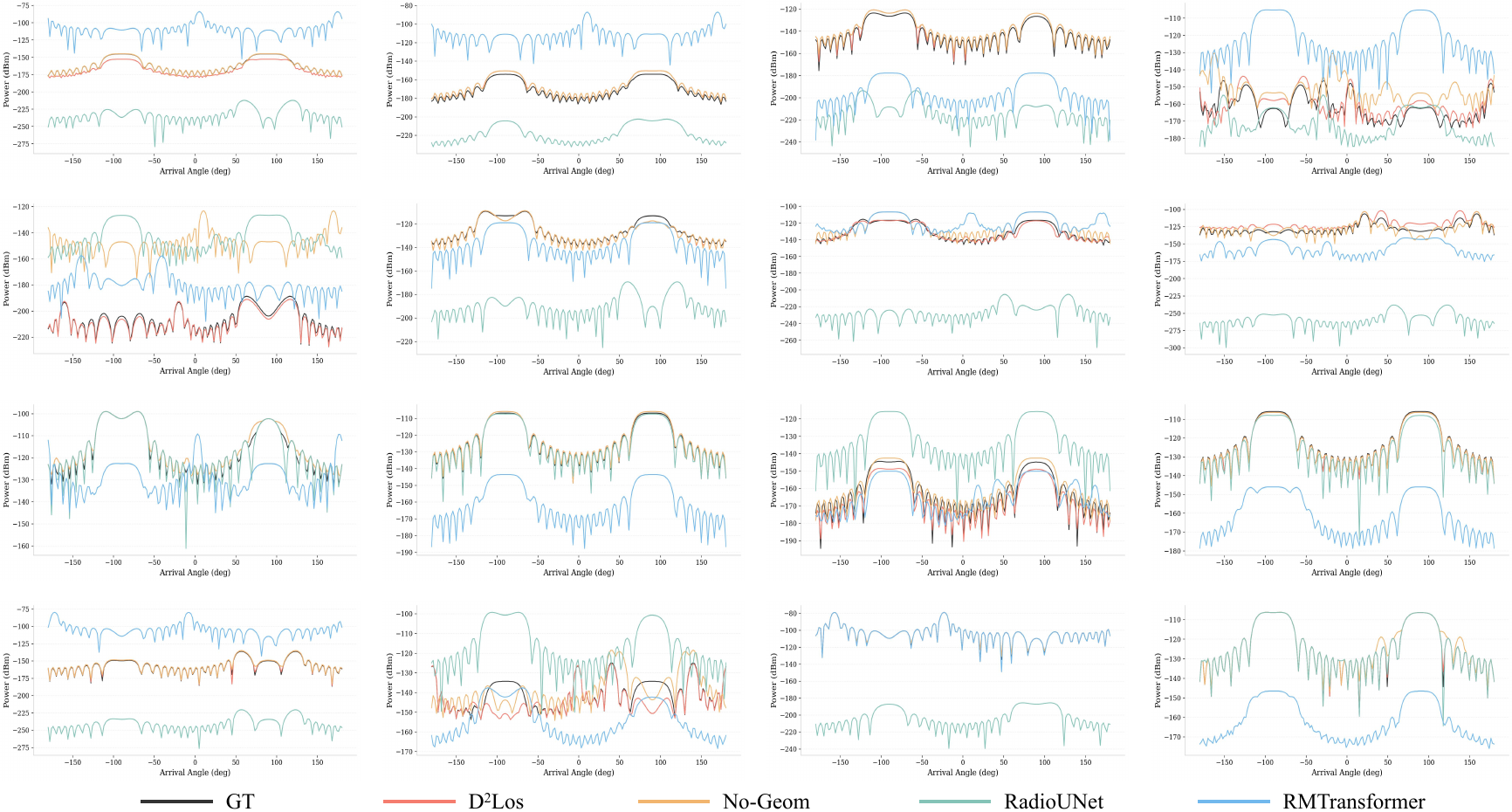}
  \caption{\textbf{APS comparison with a 45$^\circ$ beam pattern.} D$^2$LoS closely matches the ground truth. The wider beam preserves most angular features from the omnidirectional case.}
  \label{fig:supp_aps_45}
\end{figure}

\begin{figure}[ht]
  \centering
  \includegraphics[width=\linewidth, height=0.38\textheight, keepaspectratio]{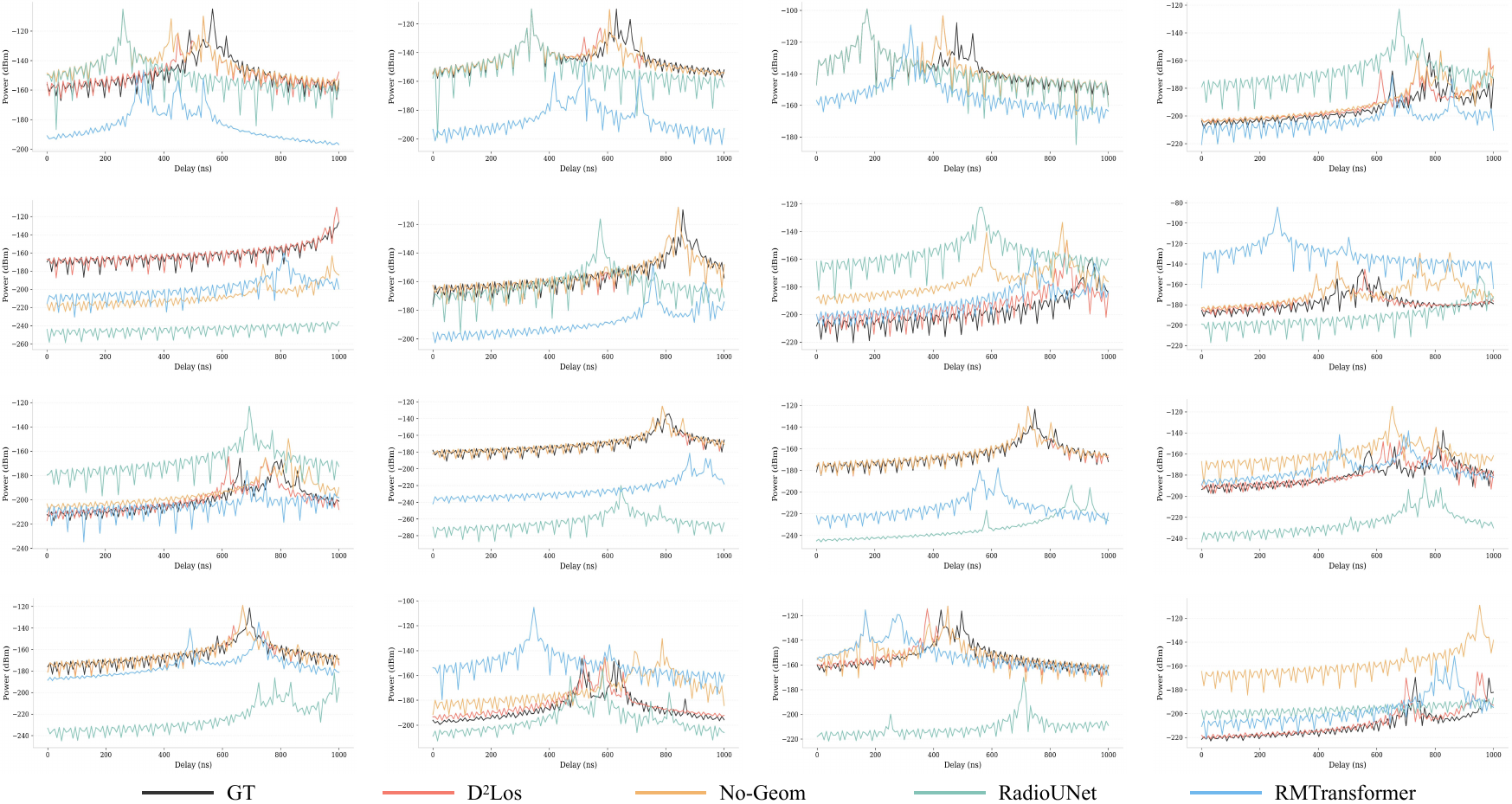}
  \caption{\textbf{PDP comparison with a 45$^\circ$ beam pattern.} The delay profile is close to the omnidirectional case. D$^2$LoS maintains accurate reproduction of multipath structure.}
  \label{fig:supp_pdp_45}
\end{figure}

\subsection*{60$^\circ$ beamwidth}

The 60$^\circ$ beamwidth represents the widest tested configuration and the mildest directional filtering. As shown in Fig.~\ref{fig:supp_violin_60}, D$^2$LoS retains low-error, compact distributions that are effectively indistinguishable from the omnidirectional case. The RSS maps (Fig.~\ref{fig:supp_rss_60}) differ from the omnidirectional maps primarily in a slight overall power offset due to the finite antenna gain. The APS and PDP curves (Figs.~\ref{fig:supp_aps_60} and \ref{fig:supp_pdp_60}) confirm that D$^2$LoS accuracy is independent of beamwidth. Taken together with the 15$^\circ$ results, these findings demonstrate that the per-ray angular information produced by D$^2$LoS is sufficiently accurate to support beam-specific analysis from narrow pencil beams to near-omnidirectional patterns, all from a single ray tracing computation.

\begin{figure}[ht]
  \centering
  \includegraphics[width=0.48\linewidth]{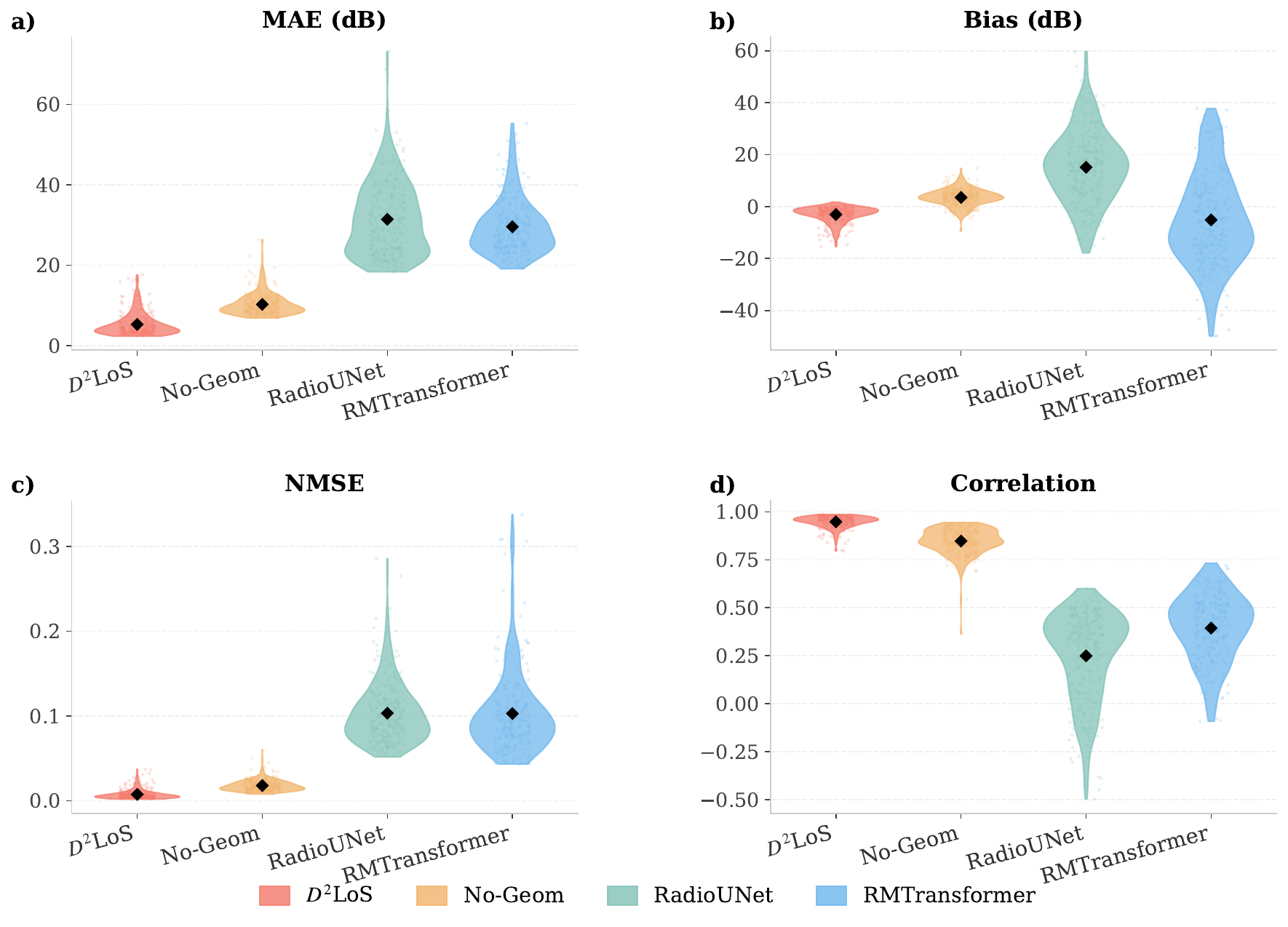}
  \hfill
  \includegraphics[width=0.48\linewidth]{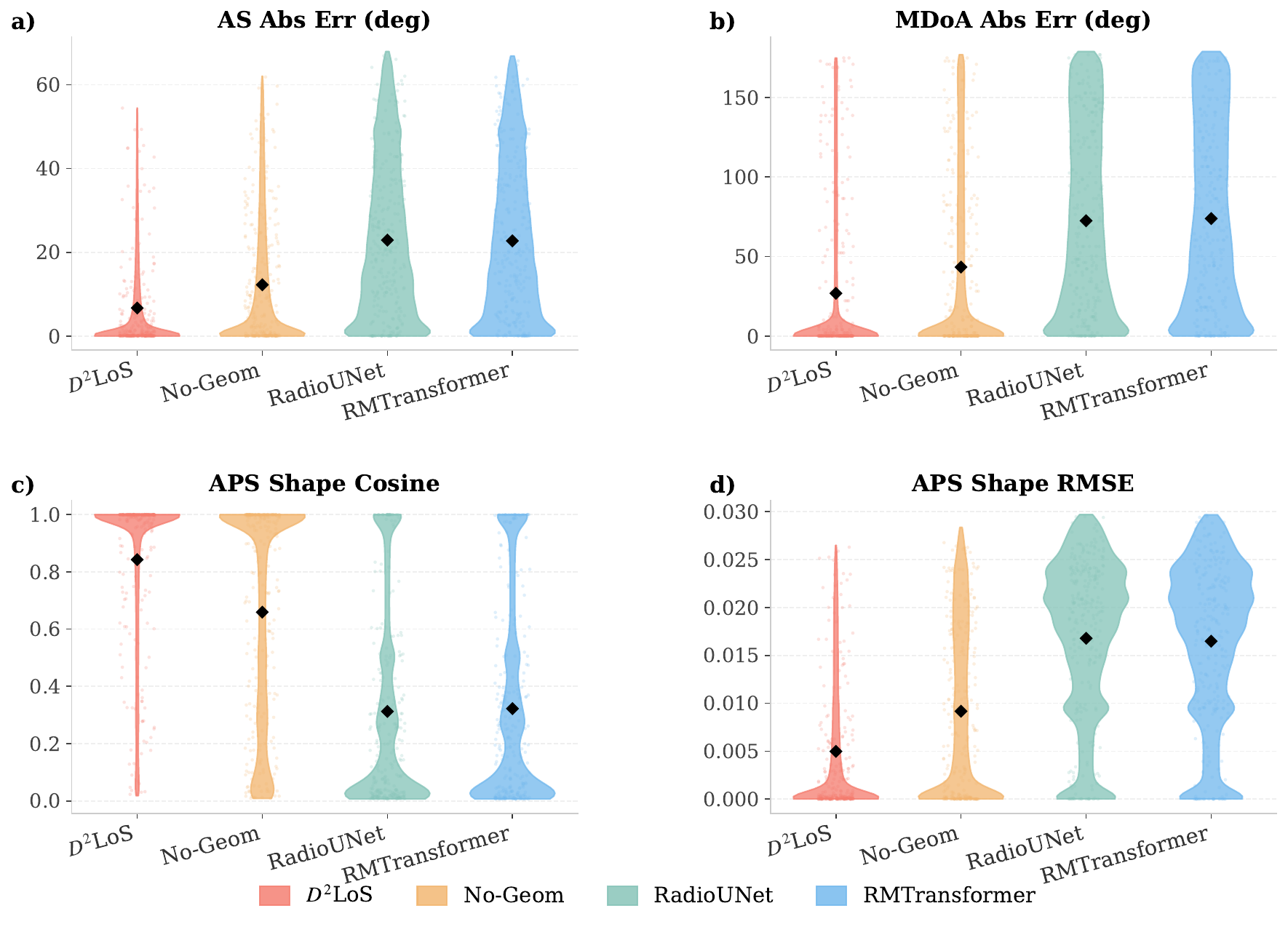}\\[6pt]
  \centering
  \includegraphics[width=0.48\linewidth]{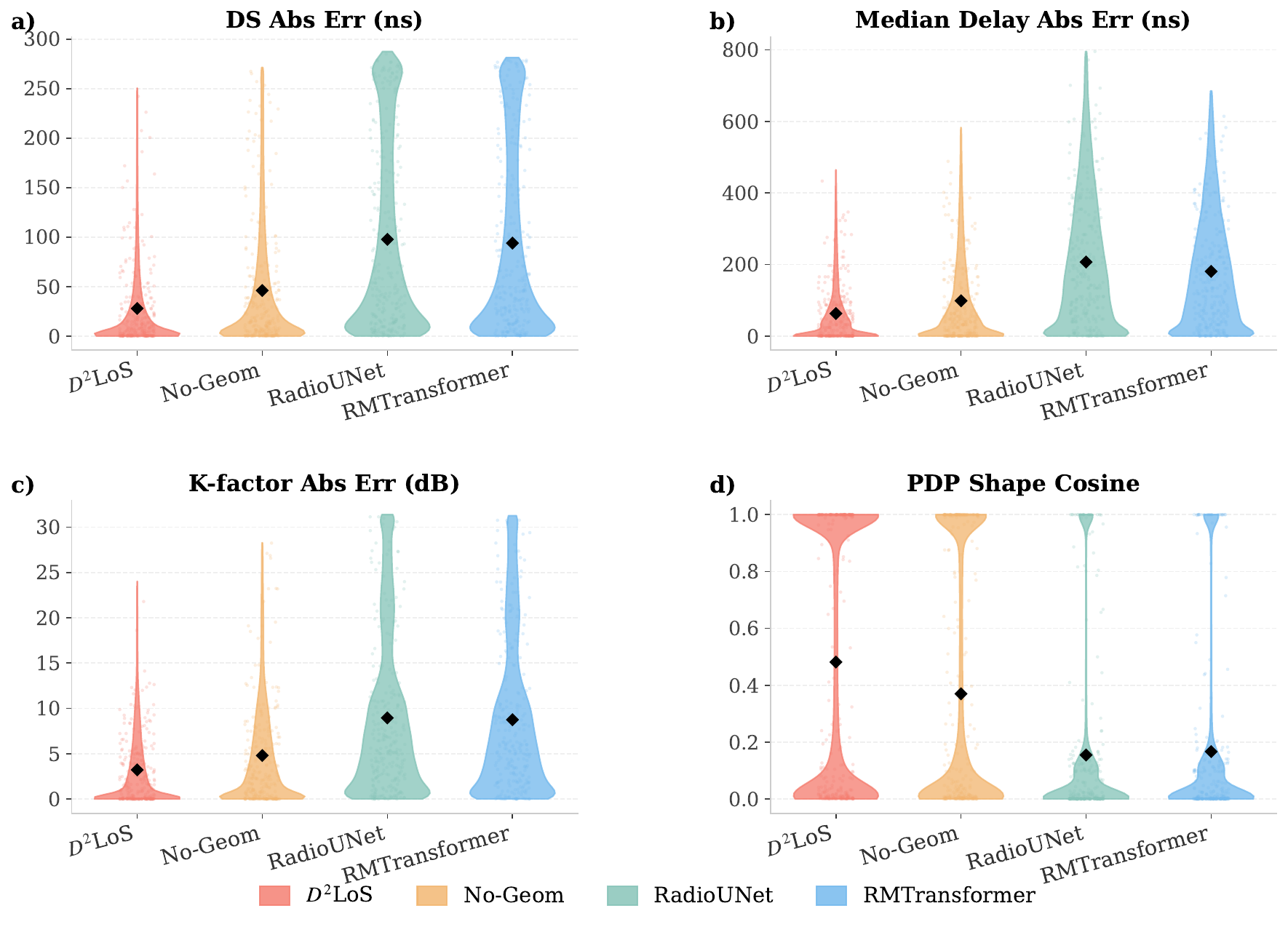}
  \caption{\textbf{Prediction accuracy with a 60$^\circ$ beam pattern.} Error distributions are effectively indistinguishable from the omnidirectional case, confirming robustness across all tested beamwidths.}
  \label{fig:supp_violin_60}
\end{figure}

\begin{figure}[ht]
  \centering
  \includegraphics[width=\linewidth, height=0.38\textheight, keepaspectratio]{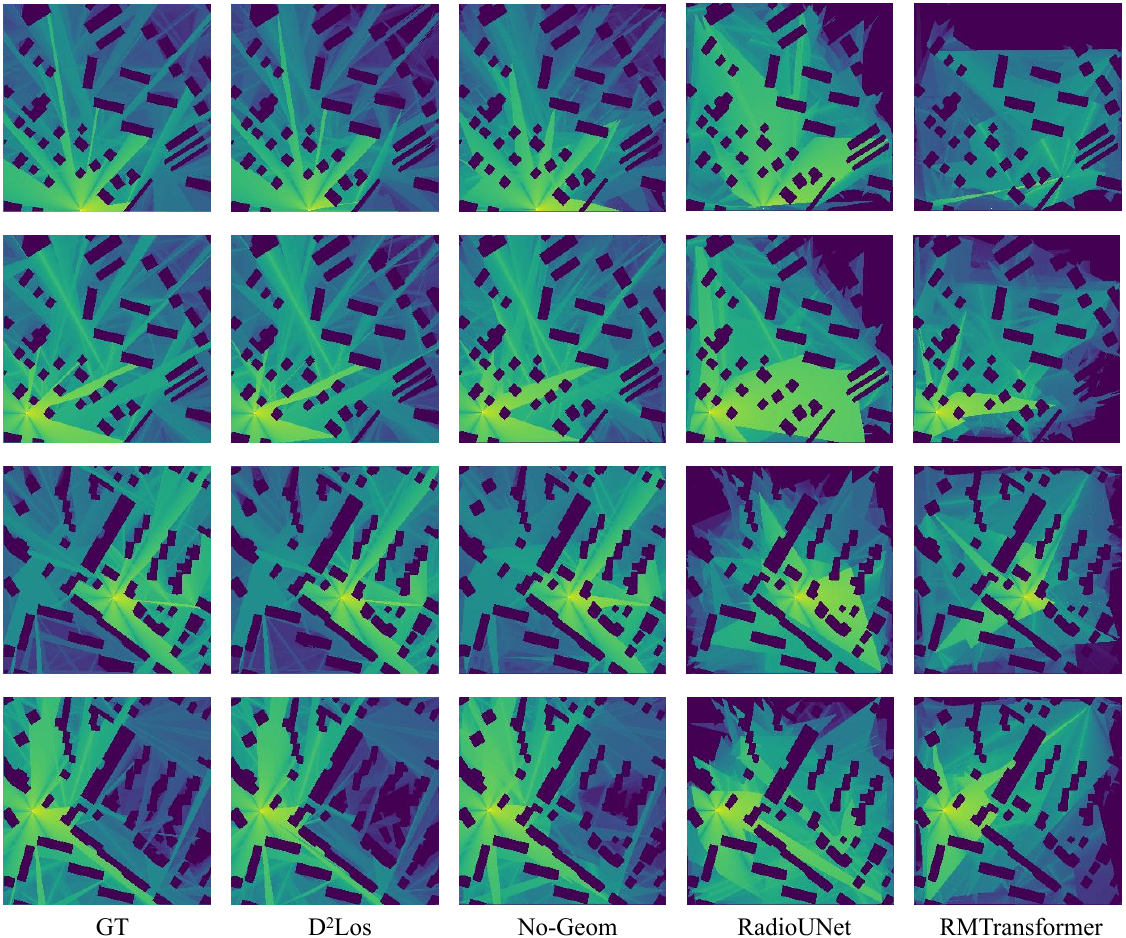}
  \caption{\textbf{RSS radio maps with a 60$^\circ$ beam pattern.} D$^2$LoS accurately reproduces the near-omnidirectional coverage with mild directional shaping.}
  \label{fig:supp_rss_60}
\end{figure}

\begin{figure}[ht]
  \centering
  \includegraphics[width=\linewidth, height=0.38\textheight, keepaspectratio]{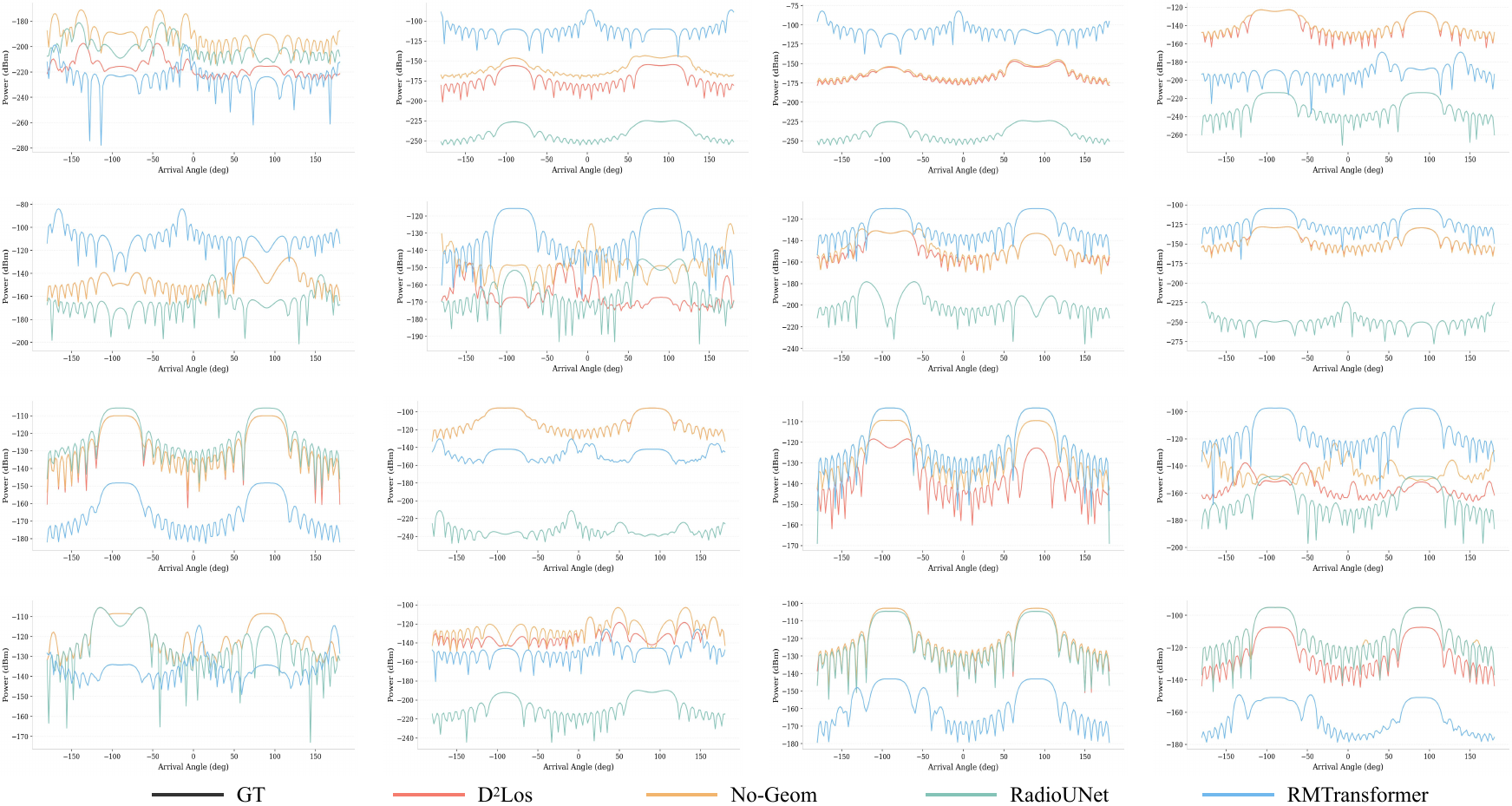}
  \caption{\textbf{APS comparison with a 60$^\circ$ beam pattern.} The angular profile is close to the omnidirectional case. D$^2$LoS maintains full accuracy.}
  \label{fig:supp_aps_60}
\end{figure}

\begin{figure}[ht]
  \centering
  \includegraphics[width=\linewidth, height=0.38\textheight, keepaspectratio]{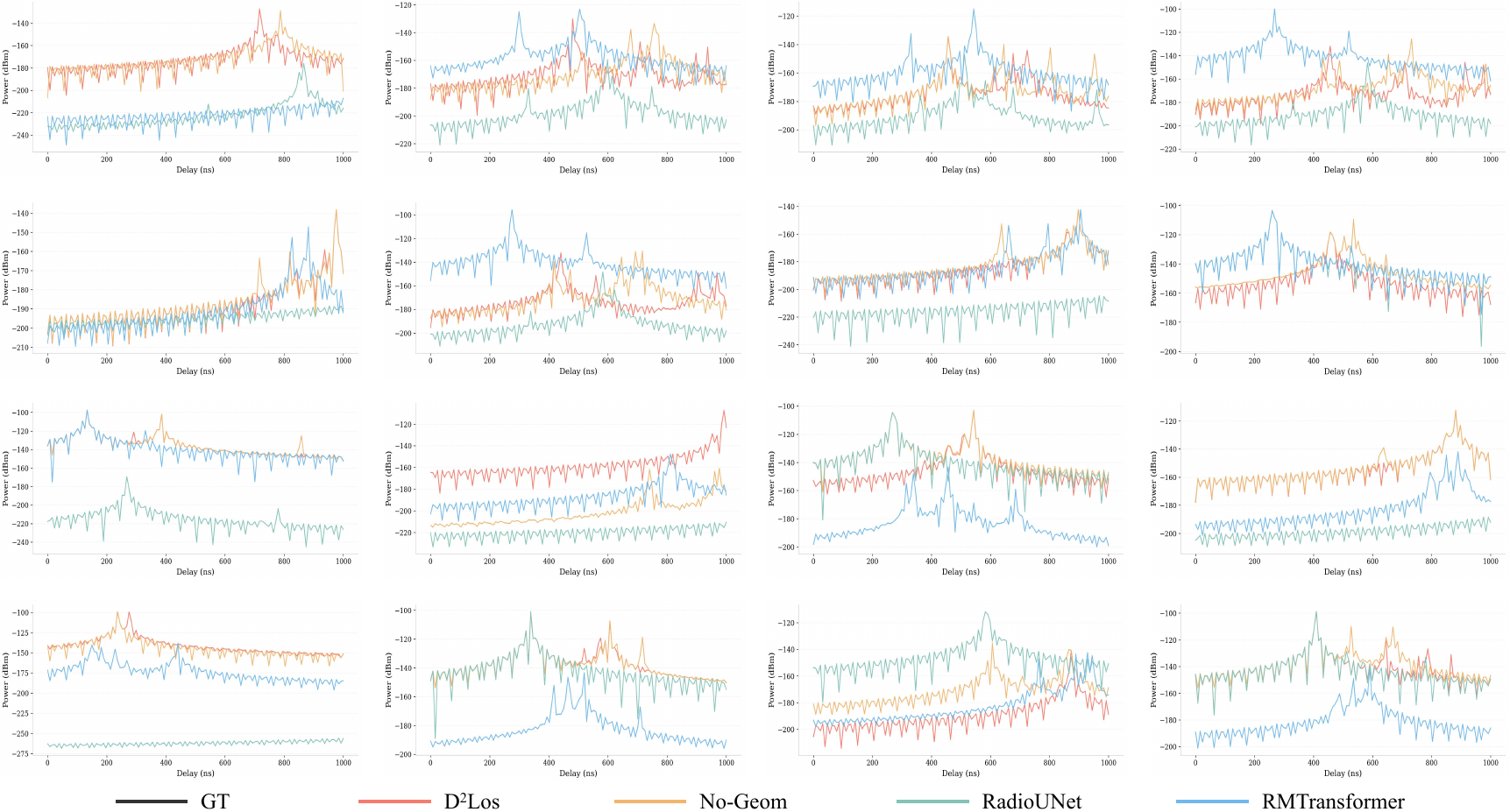}
  \caption{\textbf{PDP comparison with a 60$^\circ$ beam pattern.} Delay structure matches the omnidirectional case closely. D$^2$LoS reproduces it accurately.}
  \label{fig:supp_pdp_60}
\end{figure}
\end{appendices}

\section{Data availability}
Data is available at \url{https://github.com/UNIC-Lab/RayVerse}.

\section{Code availability}
Data is available at \url{https://github.com/UNIC-Lab/D2LoS}.

\bibliography{sn-bibliography}

\end{document}